\newtheorem{theorem}{Theorem}[section]
\newtheorem{lemma}[theorem]{Lemma}
\newtheorem{proposition}[theorem]{Proposition}
\newtheorem{corollary}[theorem]{Corollary}
\newtheorem{definition}[theorem]{Definition}
\theoremstyle{remark} 
\newtheorem{remark}{Remark}[section]
\newtheorem{example}{Example}[section]
\newcommand{\mathsym}[1]{{}}
\newcommand{\unicode}[1]{{}}
\title{On consistent systems of difference equations}
\author{Pavlos Xenitidis\footnote{e-mail: xenitip@hope.ac.uk} \\ Department of Mathematics \& Computer Science\\ Liverpool Hope University,  L16 9JD Liverpool, UK}
\date{\today}
\begin{document}

\maketitle

\begin{abstract}
We consider overdetermined systems of difference equations for a single function $u$ which are consistent, and propose a general framework for their analysis. The integrability of such systems is defined as the existence of higher order symmetries in both lattice directions and various examples are presented. Two hierarchies of consistent systems are constructed, the first one using lattice paths and the second one as a deformation of the former. These hierarchies are integrable and their symmetries are related via Miura transformations to the Bogoyavlensky and the discrete Sawada-Kotera lattices, respectively.
\end{abstract}

\section{Introduction}
Difference equations defined on an elementary quadrilateral of the lattice, also referred to as quad equations, constitute probably the most well known and well studied class of discrete integrable systems, see for instance \cite{HJN} and references therein. Their integrability can be established in various ways and the most rigorous one is provided by the existence of infinite hierarchies of generalized symmetries in both lattice directions, i.e. evolution type differential-difference equations compatible with them. 

Even though integrable quad equations admit only one hierarchy of symmetries in one direction, the same hierarchy may also be compatible with $N$-quad equations, i.e. difference equations  defined on $N>1$ consecutive quadrilaterals on the lattice, \cite{AP1,BMX,X3}. More interestingly, such differential-difference equations may also define symmetries of {\emph{overdetermined systems of difference equations}} which are {\emph{consistent}} \cite{MX}. For the continuous case the notion of consistent systems of hyperbolic type was introduced in \cite{AS}. In the discrete case, there exist sporadic examples of consistent systems which suggest that they could be related to a quad equation \cite{NRGO}, or follow from quad equations via potentiation \cite{MX}, or even from the degeneration of symmetries of two-quad equations as we demonstrate below. But there do exist integrable consistent systems which cannot be derived from a scalar equation in any of the aforementioned ways.

Here we consider first of all consistent systems which involve two two-quad equations, or, in our terminology, consistent systems of order two. We discuss their properties and symmetries and their relation to quad equations. Motivated by these examples, we propose a general framework for consistent systems of any order, and analyse the stencil on which they are defined. We discuss certain choices for dynamical variables and how they are related to the initial value problem. In particular the so-called standard dynamical variables are closely related to the symmetries of the system, and thus to its integrability.

We construct two novel hierarchies of consistent systems and discuss their integrability properties. The first hierarchy is constructed using a nice and simple method which employs lattice paths connecting the origin with the lattice points $(i,N+1-i)$, with $i=1,\ldots,N$. The integrability of the members of this hierarchy is established by the derivation of the lowest order symmetries in both lattice directions which are related to the Bogoyavlensky lattice.  The construction of the second hierarchy is more involved and only two systems were constructed explicitly. Their symmetries are given and it is shown that they are related to the Sawada-Kotera lattice. Moreover it is shown how one hierarchy can be viewed as a deformation of the other, and how these hierarchies generalise two well-known quad equations, namely equation 
$$u_{n,m} (u_{n+1,m}+u_{n,m+1}) u_{n+1,m+1} + \alpha = 0$$
derived in \cite{MX}, and Adler's Tzitzeica equation studied in \cite{A}, 
$$u_{n,m} (u_{n+1,m} + u_{n,m+1}) u_{n+1,m+1} + c = u_{n,m} + u_{n+1,m+1} + \frac{u_{n,m} u_{n+1,m} u_{n,m+1} u_{n+1,m+1}}{c}.$$
In this way we establish that these two quad equations are not some isolated objects but the lowest order members of two integrable hierarchies of consistent systems. 

The paper is organised as follows. In Section \ref{sec:2q} we consider some examples in order to explore certain connections of quad and two-quad equations with consistent systems and analyse the properties of the latter. The following section is devoted to the development of a framework for the study and analysis of consistent systems of higher order, whereas Section \ref{sec:sys} deals with the derivation of two hierarchies of consistent systems, the study of their properties, and the analysis of their relation. Finally, in the concluding section we discuss various perspectives on the subject.

\section{From scalar equations to consistent systems} \label{sec:2q}

In this section we introduce our notation and give some necessary definitions in order to make our presentation self-contained. Then we consider overdetermined systems and check whether they are consistent or not. We discuss how such systems can be derived from scalar equations and finally we present a systematic method for their construction starting with a two-quad equation and its lowest order symmetry.

Throughout this paper we deal with autonomous partial difference equations, or systems thereof, involving one unknown  function $u$ of two independent discrete variables $n$ and $m$. Since $n$, $m$ do not appear explicitly in any of our systems, we can, without loss of generality, present all equations evaluated at $n=m=0$. Therefore, in what follows we use the notation $u_{i,j}$ to denote the value of $u$ at the lattice point $(i,j)$, i.e. $u_{i,j} = u(i,j)$. Moreover, ${\mathcal{S}}$ and ${\mathcal{T}}$ will denote the shift operators in the first and the second direction, respectively, defined as ${\mathcal{S}}^i{\mathcal{T}}^j(u_{0,0}) = u_{i,j}$. 

With a symmetry of a system of partial difference equations we mean an evolution type differential-difference equation compatible with the discrete system. More precisely,
\begin{definition}
Let  $u$ depend also on a continuous variable $t$. Then, the differential-difference equation
$$\partial_t u_{0,0} =F( [u])$$
defines a symmetry of the system of difference equations $Q([u])=0$ if 
$$\sum_{i,j} \frac{\partial Q}{\partial u_{i,j}} {\cal{S}}^i {\cal{T}}^j(F) = 0 $$
holds on solutions of the system. Here, the notation $[u]$ means that these functions depend on a finite but otherwise unspecified number of shifts of $u$.
\end{definition} 

We exemplify the notion of consistency with the use of two examples of systems involving two two-quad equations

\begin{example}
	Consider the overdetermined system
	\begin{subequations}\label{eq:nonint-sys}
		\begin{eqnarray}
		&& u_{0,0} (u_{1,0} u_{1,1} + u_{0,1} u_{0,2}) u_{1,2} - \alpha = 0,\label{eq:nonint-sys-E1} \\
		&& u_{0,0} (u_{1,0} u_{2,0} + u_{0,1} u_{1,1}) u_{2,1}  + \alpha = 0.\label{eq:nonint-sys-E2}
		\end{eqnarray}
	\end{subequations}
	In order to verify its consistency, first we write these equations as
	\begin{equation} \label{eq:noint-sys-1}
	u_{1,2}= \frac{\alpha}{u_{0,0} (u_{1,0} u_{1,1} + u_{0,1} u_{0,2})} ,\quad  u_{2,1} = \frac{-\alpha}{ u_{0,0} (u_{1,0} u_{2,0} + u_{0,1} u_{1,1})},
	\end{equation}
	and then check if the compatibility condition ${\cal{S}}\left(u_{1,2}\right) = {\cal{T}}\left(u_{2,1}\right)$ holds modulo system (\ref{eq:noint-sys-1}). Equivalently, we can check if the two different ways to compute $u_{2,2}$ lead to the same answer. If we shift the first equation in the first direction and then use (\ref{eq:noint-sys-1}) to eliminate $u_{1,2}$ and $u_{2,1}$, we will end up with
	$$ u_{2,2} = \frac{u_{0,0}}{u_{1,0} u_{0,1}} \frac{(u_{1,0} u_{2,0} +u_{0,1} u_{1,1}) (u_{0,1} u_{0,2} + u_{1,0} u_{1,1})}{u_{1,1}^2 - u_{2,0} u_{0,2}}.$$
	On the other hand, the shift of the second equation in the second direction and the use of the system for the elimination of $u_{1,2}$ and $u_{2,1}$ lead to the same expression for $u_{2,2}$. This clearly shows that the compatibility condition ${\cal{S}}\left(u_{1,2}\right) = {\cal{T}}\left(u_{2,1}\right)$ does not impose further restrictions on $u$, and therefore system (\ref{eq:nonint-sys}) is consistent. \hfill $\Box$
\end{example}

\begin{example}
	Another consistent system is the bilinear equations for the $\tau$-function of the lattice KdV given in \cite{NRGO},
	\begin{subequations} \label{eq:tau-H1}
		\begin{eqnarray}
		&& (p+q) \tau_{0,2} \tau_{1,0} - (p-q) \tau_{0,0} \tau_{1,2} - 2 q \tau_{0,1} \tau_{1,1} = 0, \label{eq:tau-H1-E1} \\
		&& (p+q) \tau_{2,0} \tau_{0,1} - (q-p) \tau_{0,0} \tau_{2,1} - 2 p \tau_{1,0} \tau_{1,1} = 0, \label{eq:tau-H1-E2}
		\end{eqnarray}
	\end{subequations}
	where $p,q \in {\mathbb{R}}$. It is a simple calculation to verify that this system is consistent and in particular to show that its consistency leads to 
	$$ \tau_{2,2} = \left( \frac{p+q}{p-q}\right)^2 \frac{\tau_{2,0} \tau_{0,2}}{\tau_{0,0}} - \frac{4 p q}{(p-q)^2} \frac{\tau_{1,1}^2}{\tau_{0,0}},$$
	i.e. a discrete Toda equation. \hfill $\Box$
\end{example}

Consistent systems are relatively rare and probably more difficult to construct. Such systems may follow from the potentiation of lower order systems as it is demonstrated in the following example. See also  \cite{MX} for other examples.

\begin{example}{{\emph{Potentiation of a quad equation}}}
	
	We start with equation \cite{MX}
	\begin{equation} \label{eq:quad-eq}
	v_{1,0} v_{0,1} \left(v_{0,0}+v_{1,1}\right) + 1 = 0
	\end{equation}
	and its conservation law 
	$$({\cal{T}}-1) \ln \frac{v_{0,0}}{v_{2,0} v_{1,0} v_{0,0}-1} = ({\cal{S}}-1) \ln (v_{0,0} v_{1,0}).$$
	We can use this conserved form of (\ref{eq:quad-eq}) to introduce a potential $u$ via the relations
	\begin{equation} \label{eq:intro-ex}
	\frac{u_{1,0}}{u_{0,0}}\,=\,\frac{v_{0,0}}{v_{2,0} v_{1,0} v_{0,0}-1}\,,\quad \frac{u_{0,1}}{u_{0,0}}\,=\,v_{0,0} v_{1,0}\,.
	\end{equation}
	If we solve them for $u_{1,0}$ and $u_{0,1}$, their compatibility condition ${\cal{T}}(u_{1,0}) = {\cal{S}}(u_{0,1})$ is identically zero on solutions of (\ref{eq:quad-eq}). On the other hand, it follows from the equations  that
	\begin{equation} \label{eq:intro-ex1}
	v_{1,0} = \frac{u_{0,1}}{u_{0,0} v_{0,0}},\quad v_{2,0} = \frac{u_{0,0} (u_{0,0} v_{0,0}+u_{1,0})}{u_{1,0} u_{0,1}}. 
	\end{equation}
	The compatibility condition ${\cal{S}}(v_{1,0}) = v_{2,0}$ of the latter system implies
	\begin{equation} \label{eq:intro-con}
	v_{0,0} = \frac{u_{1,0}}{u_{1,1}-u_{0,0}}.
	\end{equation}
	Substituting this into the first relation in (\ref{eq:intro-ex1}) and the quad equation (\ref{eq:quad-eq}) we end up with the system
	\begin{subequations} \label{eq:intro-sys}
		\begin{eqnarray}
		&& (u_{1,0}-u_{0,2}) (u_{0,0}-u_{1,1}) u_{0,1} - (u_{1,0}-u_{0,2}) u_{0,0} u_{1,2} - u_{0,2} u_{1,1} u_{1,2} =0, \label{eq:intro-sys-1}\label{eq:intro-sys-E1}\\
		&& (u_{1,0}-u_{2,1}) (u_{0,0}-u_{1,1}) u_{0,1} - u_{0,0} u_{1,0} u_{2,0}\,=\,0.\label{eq:intro-sys-E2}
		\end{eqnarray} 
	\end{subequations} 
	It can be shown that system (\ref{eq:intro-sys}) is consistent. Indeed, rearranging the equations of the system and write them as
	\begin{equation}\label{eq:intro-sol-sys}
	u_{1,2} =  \frac{ (u_{1,0}-u_{0,2}) (u_{0,0}-u_{1,1}) u_{0,1}}{ (u_{1,0}-u_{0,2}) u_{0,0} + u_{0,2} u_{1,1} },\quad u_{2,1} =  u_{1,0} - \frac{u_{0,0} u_{1,0} u_{2,0}}{(u_{0,0}-u_{1,1}) u_{0,1}},
	\end{equation}
	we can easily show that both of them lead to the same expression for $u_{2,2}$, namely
	$$ u_{2,2} = u_{0,0} \left( 1 - \frac{u_{2,0}}{u_{0,1}} - \frac{u_{1,0}}{u_{0,2}} + \frac{u_{0,0}}{u_{0,0}-u_{1,1}} \,\frac{u_{1,0} u_{2,0}}{u_{0,1}u_{0,2}}\right).$$
	Finally, using the Lax pair for (\ref{eq:quad-eq}) found in \cite{MX} along with relations (\ref{eq:intro-ex1}) and (\ref{eq:intro-con}) we end up with
	\begin{equation} \label{lax-pair}
	\Psi_{1,0}= \left( \begin{array}{ccc} 0 & 1 & 0\\ \tfrac{u_{1,0}}{u_{0,0}-u_{1,1}} & \tfrac{-u_{0,1}}{u_{0,0}} & \lambda \\ -1 & 0 & \tfrac{u_{1,1} -u_{0,0}}{u_{1,0}} \end{array} \right) \Psi_{0,0},\quad  \Psi_{0,1}= \left( \begin{array}{ccc} 0 & 0 & 1\\ -1 & 0 & \tfrac{u_{1,1} -u_{0,0}}{u_{1,0}} \\  \tfrac{u_{0,2} u_{1,1} + u_{0,0} (u_{1,0} -u_{0,2})}{\lambda u_{0,1} (u_{0,0}-u_{1,0})} &  \tfrac{-1}{\lambda} & 0 \end{array} \right)  \Psi_{0,0},
	\end{equation}
	which is a Lax pair for system (\ref{eq:intro-sys}). \hfill $\Box$
\end{example}

Consistent systems can be derived from a rather unusual approach employing symmetries. It is well known that symmetries provide us the means to find particular classes of solutions, aka group invariant solutions, by solving the overdetermined system of the equation and the vanishing of the characteristic of the symmetry generator. But if the symmetry generator is a rational expression, we may consider the vanishing of its denominator as an additional equation. This looks odd in first place but surprisingly it provides us with equations consistent with our original equation as it is explained in the following example. See also \cite{BMX} for quadrilateral equations  defining particular solutions of two-quad equations and \cite{AP} for examples involving higher order quad equations.

\begin{example}{{\emph{Degeneration of symmetries and consistency}}}\label{ex:deg}
	
	Consider the first equation of system (\ref{eq:intro-sys}) as a single two-quad equation,
	\begin{equation}\label{eq:intro-ex2-3}
	(u_{1,0}-u_{0,2}) (u_{0,0}-u_{1,1}) u_{0,1} - (u_{1,0}-u_{0,2}) u_{0,0} u_{1,2} - u_{0,2} u_{1,1} u_{1,2} =0.
	\end{equation}
	It is a straightforward but cumbersome calculation to show that the differential-difference equations 
	\begin{equation} \label{eq:intro-ex2-4}
	\partial_{t^\prime} u_{0,0} = \frac{u_{0,0} u_{1,0} u_{0,1} (u_{0,0}-u_{1,1})}{ (u_{0,0}-u_{1,1}) (u_{-1,0}-u_{0,1}) u_{-1,1} - u_{-1,0} u_{0,0} u_{1,0} } ,\quad \partial_{s} u_{0,0} \,=\,\frac{u_{0,0} u_{0,1} u_{0,2}}{(u_{0,2}-u_{0,-1}) (u_{0,1}-u_{0,-2})}
	\end{equation}
	define generalized symmetries of (\ref{eq:intro-ex2-3}). What is not so obvious is that if we shift the denominator of the first symmetry forward in the first direction and set it equal to zero, we will end up with
	\begin{equation}\label{eq:intro-ex2-1}
	(u_{1,0}-u_{2,1}) (u_{0,0}-u_{1,1}) u_{0,1} - u_{0,0} u_{1,0} u_{2,0} =0,
	\end{equation}
	which is consistent with (\ref{eq:intro-ex2-3}). In other words we could have derived consistent system (\ref{eq:intro-sys}) not as a potential form of (\ref{eq:quad-eq}) but starting with equation (\ref{eq:intro-ex2-3}) and requiring the degeneration of one of its symmetries.
	
	Alternatively, we could have considered equation (\ref{eq:intro-ex2-1}) and its generalized symmetries
	\begin{equation} \label{eq:intro-ex2-2}
	\partial_{t} u_{0,0} \,=\,u_{0,0}\,\left(\frac{u_{2,0}}{u_{-1,0}} + \frac{u_{1,0}}{u_{-2,0}} \right),\quad \partial_{s^\prime} u_{0,0} \, = \,\frac{u_{0,0} u_{0,-1} u_{1,-1}(u_{0,0}-u_{1,1})}{(u_{1,1}-u_{0,0}) (u_{0,1}-u_{1,-1}) u_{0,-1} + (u_{0,1}-u_{1,-1}) u_{0,0} u_{1,0} - u_{1,0} u_{0,1} u_{1,1}}.
	\end{equation}
	It is not difficult now to see that the denominator of the second symmetry shifted forward in the second direction is the defining function of (\ref{eq:intro-ex2-3}).  Thus we could have derived system (\ref{eq:intro-sys}) in two different ways without any reference to the quad equation (\ref{eq:quad-eq}). 
	
	A very interesting observation is that the lowest order symmetries of system (\ref{eq:intro-sys}) are given by the first flow in (\ref{eq:intro-ex2-2}) and the second one in (\ref{eq:intro-ex2-4}), i.e. by the symmetries of (\ref{eq:intro-ex2-3}) and (\ref{eq:intro-ex2-1}) which do not degenerate on the solutions of the overdetermined system (\ref{eq:intro-sys}).  \hfill $\Box$
\end{example}

Our last example is on the construction of a consistent system starting with a two-quad equation and its symmetry. This constructive approach will be used later in the derivation of a consistent system of order three.

\begin{example}{{\emph{Construction of a consistent system}}} \label{ex:const}
	
	We start with equation
	\begin{equation}\label{eq:ex2-e1}
	E_1 := u_{0,1} u_{0,2} \left(1+ a (u_{0,0}+u_{1,0})\right) + u_{1,0} u_{0,2} \left(1+ a u_{1,1}\right) + u_{1,0} u_{1,1} \left(1+ a u_{1,2}\right) = 0
	\end{equation}
	which possesses a generalised symmetry of order 3 in the second lattice direction generated by
	\begin{equation}\label{eq:ex2-sym}
	\partial_s u_{0,0} = u_{0,0} (1+ a u_{0,0}) (u_{0,3} u_{0,2} u_{0,1} - u_{0,-1} u_{0,-2} u_{0,-3}).
	\end{equation}
	Suppose that $E_2(u_{0,0},u_{1,0},u_{2,0},u_{0,1},u_{1,1},u_{2,1}) = 0$ is another equation consistent with (\ref{eq:ex2-e1}). If we shift it forward in the second direction, eliminate $u_{2,2}$ and $u_{1,2}$ using (\ref{eq:ex2-e1}) and its shift, then the resulting expression must independent of $u_{0,2}$. Thus, if we differentiate it with respect to $u_{0,2}$ and then shift backwards in the second direction, we will end up with
	$$ a u_{1,1} \left(a u_{0,1} (\partial_{u_{0,1}} E_2) + (1+ a u_{1,1} )  (\partial_{u_{1,1}} E_2)\right) + (1+a u_{1,1}) (1+a u_{2,1}) (\partial_{u_{2,1}} E_2) = 0,$$
	after the use of the backward shift of equation (\ref{eq:ex2-e1}) for the elimination of variables $u_{2,-1}$ and $u_{1,-1}$.
	
	On other hand if we use (\ref{eq:ex2-e1}) and its shift to eliminate $u_{2,0}$ and $u_{1,0}$ from $E_2$, then the resulting expression should be independent of $u_{0,0}$. Its differentiation with respect to $u_{0,0}$ yields
	$$ (1+a u_{1,0})  \left((1+ a u_{0,0}) (\partial_{u_{0,0}} E_2) + a u_{1,0} )  (\partial_{u_{1,0}} E_2)\right) + a^2 u_{1,0} u_{2,0} (\partial_{u_{2,0}} E_2)=0.$$
	These two linear partial differential equations are compatible and their solution is
	$$ E_2 = F(z_1,z_2,z_3,z_4) = F \left( \frac{u_{1,0}}{1+ a u_{0,0}}, \frac{u_{2,0}}{1+a u_{1,0}}, \frac{1+ a u_{1,1}}{u_{0,1}}, \frac{1+ a u_{2,1}}{ a u_{1,1}}\right) .$$
	
	Now we require $E_2=0$ to be consistent with the symmetry (\ref{eq:ex2-sym}), i.e. the determining  equation
	\begin{equation} \label{eq:ex2-deteq}
	\sum_{i=0}^{2} \sum_{j=0}^1  \left(u_{i,j} (1+ a u_{i,j}) (u_{i,j+3} u_{i,j+2} u_{i,j+1} - u_{i,j-1} u_{i,j-2} u_{i,j-3})\right) \left(\partial_{u_{i,j}} E_2\right) = 0
	\end{equation}
	must hold on solutions of $E_1 = E_2 = 0 $. We eliminate variables $u_{\ell,-3}$, $u_{\ell,-2}$, $u_{\ell,-1}$, $u_{\ell,2}$, $u_{\ell,3}$ and $u_{\ell,4}$ with $\ell =1,2$, from (\ref{eq:ex2-deteq}) using (\ref{eq:ex2-e1}) and its shifts. This results to an equation which apart from the variables appearing in the arguments of $E_2$ involves also $u_{0,-2}$, $u_{0,-1}$, $u_{0,2}$ and $u_{0,3}$. The coefficient of $u_{3,0}$ leads to
	$$ a z_1 z_2 (1+a z_1) F_{z_1} + z_2 (1+a z_2) F_{z_2} - a z_2 z_3 F_{z_3} - z_4 F_{z_4}=0.$$
	The general solution to this equation can be written as 
	$$ F(z_1,z_2,z_3,z_4) = G(t_1,t_2,t_3) \quad {\text{where}} \quad  t_1= \frac{z_1 z_3}{1+a z_1},\,\,t_2= \frac{1+a z_2 (1+a z_1)}{a z_1}, \,\,t_3 = \frac{a z_2 z_4 (1+a z_1)}{z_1}.$$
	In view of this, the coefficient of $u_{0,-2}$ becomes
	$$  (1+t_1+t_1 t_2) G_{t_2}  - (a-t_1 t_3) G_{t_3} + a z_1 \left( t_1 (1+t_1) G_{t_1} - t_2 G_{t_2} + a t_2 G_{t_3} \right) = 0,$$
	where $z_1$ plays the role of a separation variable. Solving this system for $G$ we end up with
	$$ E_2 = F(z_1,z_2,z_3,z_4) = G(t_1,t_2,t_3) = H\left(\frac{t_3+ a t_2 + t_1 t_3 }{1 + t_1 + t_1 t_2}\right) = H \left(\frac{a^2 z_1 z_2 (1+z_4) + a z_2 (1+z_4 +z_1 z_3 z_4) + 1}{z_1 (z_3 + a (1+z_2 z_3))} \right) = H(x). $$
	
	Finally, taking into account this form for $E_2$ and after the elimination of all variables as described above, the determining equation (\ref{eq:ex2-deteq}) can be written as $ x\, H^\prime(x)  = 0 $, which clearly implies that $H(x) = x$ and $x=0$ is the sought equation, or explicitly 
	\begin{equation}
	u_{2,0}\left(1+a u_{2,1}\right) \left\{ u_{1,0} \left(1+a u_{1,1}\right)   + u_{0,1}  \left(1+a (u_{0,0}+u_{1,0})\right)\right\}  + u_{0,1} u_{1,1} \left\{\left(1+ a u_{0,0}\right) \left(1+ a u_{1,0}\right) + a u_{2,0} \left(1+ a (u_{0,0}+u_{1,0})\right)\right\}=0. \label{eq:ex2-e2}
	\end{equation}
	It can be easily checked that equations (\ref{eq:ex2-e1}) and (\ref{eq:ex2-e2}) are consistent and (\ref{eq:ex2-sym}) is a symmetry of the system. 
	
	A symmetry in the first direction can be found using only equation (\ref{eq:ex2-e2}) and the method of \cite{X3} and can be written as
	\begin{subequations} \label{eq:sym-V}
		\begin{equation}
		\partial_t u_{0,0} = \frac{V_{0,0} p_{0,0}}{q_{0,0} q_{-1,0}} \left(\frac{V_{1,0} V_{2,0} p_{-1,0}}{q_{1,0}} - \frac{V_{-1,0} V_{-2,0} p_{1,0}}{q_{-2,0}} - r_{0,0}\right),
		\end{equation}
		where $V_{0,0} = u_{0,0} (1+\alpha u_{0,0})$ and
		\begin{eqnarray}
		q_{0,0} &=& (1+\alpha u_{0,0})  (1+\alpha u_{1,0})  (1+\alpha u_{2,0}) + \alpha u_{-1,0} \left(1+ \alpha(u_{0,0}+u_{1,0}+u_{2,0}) + \alpha^2 (u_{0,0} u_{1,0} + u_{1,0} u_{2,0} + u_{2,0} u_{0,0})\right),\\
		p_{0,0} &=& (1+\alpha u_{0,0})  (1+\alpha u_{1,0}) + \alpha u_{-1,0} \left(1+  \alpha(u_{0,0}+u_{1,0}) \right)= \alpha^{-1} \partial_{u_{2,0}} q_{0,0},\\
		r_{0,0} &=& u_{2,0} u_{1,0} (1+ \alpha u_{-1,0})- u_{-2,0} u_{-1,0} (1+ \alpha u_{1,0})+ \alpha u_{2,0} u_{-2,0} (u_{1,0}-u_{-1,0}).
		\end{eqnarray}
	\end{subequations}
It should be noted that the Miura transformation $w_{0,0} = u_{2,0} p_{0,0}/q_{0,0}$ maps symmetry (\ref{eq:sym-V}) to the Bogoyavlensky lattice $\partial_t w_{0,0} = w_{0,0} (1+ a w_{0,0}) (w_{3,0} w_{2,0} w_{1,0} - w_{-1,0} w_{-2,0} w_{-3,0})$. \hfill $\Box$
\end{example}

\section{Overdetermined systems of difference equations and consistency} \label{sec:def}

The systems we discussed in the previous section have three properties in common.
\begin{enumerate}
	\item The two equations constituting these systems are defined on different stencils. The first equation of these systems is defined on two consecutive quadrilaterals in the vertical direction, whereas the second equation is given on two consecutive quadrilaterals horizontally. The two stencils form a staircase with two steps and their intersection is an elementary quadrilateral on the lattice. 
	\item Every equation of the system can be solved uniquely for the values of $u$ at the corners of the rectangular stencil they are defined.  Specifically, equations  (\ref{eq:nonint-sys-E1}), (\ref{eq:tau-H1-E1}), (\ref{eq:intro-sys-E1}) and (\ref{eq:ex2-e1}) can be solved uniquely for $u_{0,0}$, $u_{0,2}$, $u_{1,0}$ and $u_{1,2}$, whereas  (\ref{eq:nonint-sys-E2}), (\ref{eq:tau-H1-E2}), (\ref{eq:intro-sys-E2}) and (\ref{eq:ex2-e2}) for $u_{0,0}$, $u_{2,0}$, $u_{0,1}$ and $u_{2,1}$.
	\item They are  consistent.
\end{enumerate}
Using these properties as a prototype, we propose their generalization to overdetermined systems involving $N$ equations for one function $u$. More precisely, we consider overdetermined systems of $N$ equations for a scalar function $u$ which satisfy the following three properties. For simplicity in what follows we denote such a system with $C_N$ and refer to $N$ as its order. 
\begin{enumerate}
\item[R1.] {\emph{Each equation of the system is defined on a different stencil.}} \\ More precisely, with a given integer $N$ we consider the line $n+m=N+1$ on the ${\mathbb{Z}}^2$ lattice and  the right isosceles triangle $\Delta_N$ with vertices at the points $(0,0)$, $(N+1,0)$ and $(0,N+1)$. The $N$ rectangles ${\cal{R}}_i$ inscribed in $\Delta_N$ with vertices at the lattice points $(0,0)$, $(i,0)$, $(0,j)$ and $(i,j)$, with $i+j=N+1$ and  $i=1,\ldots, N$, are the stencils of the $N$ equations of the system, i.e.
\begin{equation} \label{eq:gen-cn}
C_N = \left\{ E_{i}\left(u_{0,0},\ldots,u_{i,0}, \ldots, u_{0,j},\ldots,u_{i,j}\right) = 0,\quad  i=1,\ldots,N \,\,, {\mbox{ and }}\,\,\, j= N-i+1\right\} .
\end{equation}

\begin{figure}[!htb]
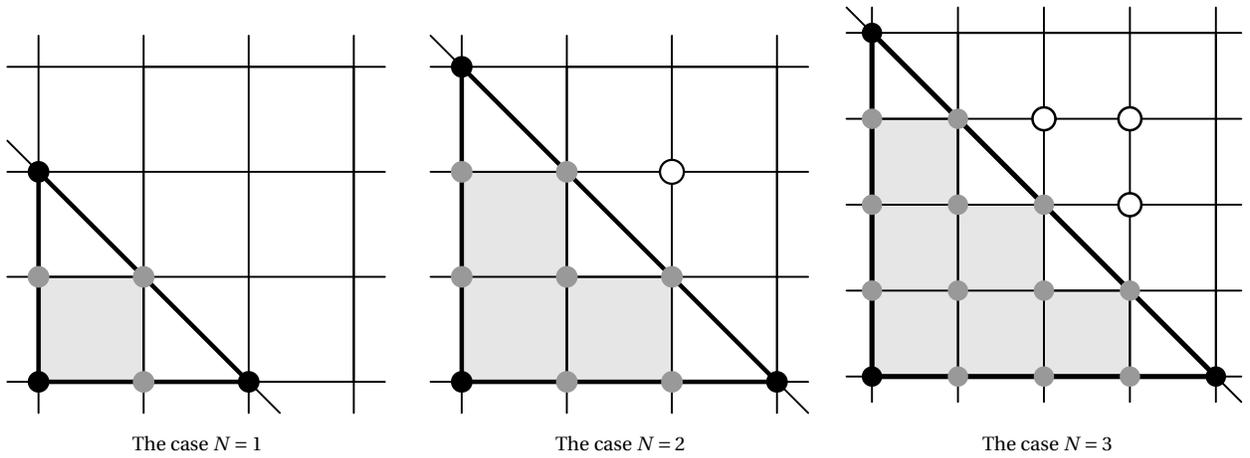

	\begin{minipage}[b]{12pc}
		\centering
		\centertexdraw{ \setunitscale 0.55
			\linewd 0.02 \arrowheadtype t:F 
			\htext(0 0.5) {\phantom{T}}
			\move (0 0) \lvec (1 0) \lvec (1 1) \lvec (0 1) \lvec (0 0) \lfill f:0.9
			\move (1 0) \lvec (3 0) \lvec (3 3) \lvec(1 3) \lvec(1 0) 
			\move (-0.3 0) \lvec(3.3 0)
			\move (-0.3 1) \lvec(3.3 1)
			\move (-0.3 2) \lvec(3.3 2)
			\move (-0.3 3) \lvec(3.3 3)
			\move (0 -0.3) \lvec(0 3.3)
			\move (1 -0.3) \lvec(1 3.3)
			\move (2 -0.3) \lvec(2 3.3)
			\move (3 -0.3) \lvec(3 3.3)
			\move(-0.3 2.3) \lvec (2.3 -0.3)
			\linewd 0.04 \move (0 0) \lvec (2 0) \lvec (0 2) \lvec(0 0)
			\move (0 0) \fcir f:0 r:0.1 \move (1 0) \fcir f:0.6 r:0.1
			\move (2 0) \fcir f:0 r:0.1 
			\move (0 1) \fcir f:0.6 r:0.1 \move (1 1) \fcir f:0.6 r:0.1
			\move (0 2) \fcir f:0 r:0.1
			\htext (0.9 -.65) {\footnotesize{The case $N=1$}} 			
		}
	\end{minipage}
	\hspace{1.pc}
	\begin{minipage}[b]{12pc}
		\centering
		\centertexdraw{ \setunitscale 0.55
			\linewd 0.02 \arrowheadtype t:F 
			\htext(0 0.5) {\phantom{T}}
			\move (0 0) \lvec (2 0) \lvec (2 1) \lvec (0 1) \lvec (0 0) \lfill f:0.9
			\move (0 0) \lvec (1 0) \lvec (1 2) \lvec (0 2) \lvec (0 0) \lfill f:0.9
			\move (1 0) \lvec (3 0) \lvec (3 3) \lvec(1 3) \lvec(1 0) 
			\move (-0.3 0) \lvec(3.3 0)
			\move (-0.3 1) \lvec(3.3 1)
			\move (-0.3 2) \lvec(3.3 2)
			\move (-0.3 3) \lvec(3.3 3)
			\move (0 -0.3) \lvec(0 3.3)
			\move (1 -0.3) \lvec(1 3.3)
			\move (2 -0.3) \lvec(2 3.3)
			\move (3 -0.3) \lvec(3 3.3)
			\move(-0.3 3.3) \lvec (3.3 -0.3)
			\linewd 0.04 \move (0 0) \lvec (3 0) \lvec (0 3) \lvec(0 0)
			\move (0 0) \fcir f:0 r:0.1 \move (1 0) \fcir f:0.6 r:0.1
			\move (2 0) \fcir f:0.6 r:0.1 \move (3 0) \fcir f:0 r:0.1
			\move (0 1) \fcir f:0.6 r:0.1 \move (1 1) \fcir f:0.6 r:0.1
			\move (2 1) \fcir f:0.6 r:0.1 
			\move (0 2) \fcir f:0.6 r:0.1 \move (1 2) \fcir f:0.6 r:0.1
			\move (0 3) \fcir f:0 r:0.1
			\move (2 2) \lcir r:0.1 \fcir f:1 r:0.1
			\htext (0.9 -.65) {\footnotesize{The case $N=2$}} 			
		}
	\end{minipage}
	\hspace{1.pc}
	\begin{minipage}[b]{12pc}
		\centertexdraw{ \setunitscale 0.45
			\linewd 0.02 \arrowheadtype t:F 
			\htext(0 0.5) {\phantom{T}}
			\move (0 0) \lvec (3 0) \lvec (3 1) \lvec (0 1) \lvec (0 0) \lfill f:0.9
			\move (0 0) \lvec (1 0) \lvec (1 3) \lvec (0 3) \lvec (0 0) \lfill f:0.9				
			\move (0 0) \lvec (2 0) \lvec (2 2) \lvec (0 2) \lvec (0 0) \lfill f:0.9
			\move (1 0) \lvec (4 0) \lvec (4 4) \lvec(1 4) \lvec(1 0) 
			\move (-0.3 0) \lvec(4.3 0)
			\move (-0.3 1) \lvec(4.3 1)
			\move (-0.3 2) \lvec(4.3 2)
			\move (-0.3 3) \lvec(4.3 3)
			\move (-0.3 4) \lvec(4.3 4)
			\move (0 -0.3) \lvec(0 4.3)
			\move (1 -0.3) \lvec(1 4.3)
			\move (2 -0.3) \lvec(2 4.3)
			\move (3 -0.3) \lvec(3 4.3)
			\move (4 -0.3) \lvec(4 4.3)				
			\move(-0.3 4.3) \lvec (4.3 -0.3)
			\linewd 0.06 \move (0 0) \lvec (4 0) \lvec (0 4) \lvec(0 0)
			\move (0 0) \fcir f:0 r:0.12 \move (1 0) \fcir f:0.6 r:0.12
			\move (2 0) \fcir f:0.6 r:0.12 \move (3 0) \fcir f:0.6 r:0.12
			\move (0 1) \fcir f:0.6 r:0.12 \move (1 1) \fcir f:0.6 r:0.12
			\move (2 1) \fcir f:0.6 r:0.12 \move (3 1) \fcir f:0.6 r:0.12
			\move (0 2) \fcir f:0.6 r:0.12 \move (1 2) \fcir f:0.6 r:0.12 \move (2 2) \fcir f:0.6 r:0.12
			\move (0 3) \fcir f:0.6 r:0.12 \move (1 3) \fcir f:0.6 r:0.12
			\move (0 4) \fcir f:0 r:0.12 \move (4 0) \fcir f:0 r:0.12		
			\move (3 2) \lcir r:0.12 \fcir f:1 r:0.12 
			\move (2 3) \lcir r:0.12 \fcir f:1 r:0.12
			\move (3 3) \lcir r:0.12 \fcir f:1 r:0.12									
			\htext (1.3 -.85) {\footnotesize{The case $N=3$}}
		}
	\end{minipage}
\caption{The stencils of the equations for $C_1$, $C_2$ and $C_3$.} \label{fig:cn}
\end{figure}

\item[R2.] {\emph{Each equation of the system can be solved uniquely for any of the values of $u$ at the corners of the rectangle it is defined}}. \\
This means that $E_i=0$ can be solved uniquely for any of $u_{0,0}$, $u_{i,0}$, $u_{0,N+1-i}$ and $u_{i,N+1-i}$. \\ 
A consequence of this requirement is that system $C_N$ can be solved uniquely for any set of values of $u$ lying on the same edge of the triangle  $\Delta_N$.
\item[R3.] {\emph{System $C_N$ is consistent.}}\\
The previous requirement along with the fact that variable $u_{i,N+1-i}$ appears only in $E_i =0$ imply that $C_N$ can always be solved uniquely for $(u_{1,N},\ldots,u_{N,1})$. In particular this allows us to rewrite system (\ref{eq:gen-cn}) in the  solved form
\begin{equation}\label{eq:gen-cn-1}
C_N = \left\{u_{i,N+1-i} = F_{i}\left(u_{0,0},\ldots,u_{i,0}, \ldots, u_{0,N+1-i},\ldots,u_{i-1,N+1-i}\right),\quad i=1,\ldots,N\right\}.
\end{equation}
Using this equivalent form of $C_N$, we define consistency as follows.
\begin{definition} \label{def:con}
We call system (\ref{eq:gen-cn-1}) consistent if the following relations hold on solutions of system (\ref{eq:gen-cn-1}).
\begin{equation} \label{eq:con-con}
 {\cal{T}}^{i-j}(F_{i}) - {\cal{S}}^{i-j}(F_j) = 0,\quad \forall \,\, i>j.
 \end{equation}
\end{definition}

\begin{remark} \label{rem:con-ch}
It is sufficient to check only the consistency of consecutive equations, i.e. conditions (\ref{eq:con-con}) with $(i,j) =(\ell+1,\ell)$, for all $\ell=1,\ldots,N-1$. 
\end{remark}

Alternatively, we can state that the system is consistent if the values $u_{i,j}$, with $0 < i,j \le N$ and $N+1<i+j \le 2 N$, can be found uniquely using the equations of $C_N$.  For example when $N=2$ and $3$ this means to find uniquely the values of $u$ at the white disks in Figure \ref{fig:cn}. It should be noted that all these values are in general functions of the $\tfrac{(N+1)(N+2)}{2}$ values of $u$ involved in $\Delta_{N-1}$.
\end{enumerate}

\begin{remark} \label{rem:con-com}
The case $N=1$ corresponds to scalar quad equations, see also Figure \ref{fig:cn}, for which obviously the above notion of consistency is not applicable. However we include quad equations in our considerations because they may be interpreted as the first members of hierarchies of integrable consistent systems, see also next section. When $N=2$ the three requirements R1--R3 clearly coincide with the properties we listed at the beginning of this section. \hfill $\Box$
\end{remark}

\begin{example}
From the previous remark it is obvious that the second order systems  (\ref{eq:nonint-sys}), (\ref{eq:tau-H1}), (\ref{eq:intro-sys}) and (\ref{eq:ex2-e1}, \ref{eq:ex2-e2}) satisfy the three requirements R1-R3. It is not difficult to see that the third order system 
\begin{subequations}\label{eq:sys-ord3}
\begin{eqnarray}
&& u_{0,3} (u_{0,0}-u_{1,1}) (u_{0,1}-u_{1,2})(u_{0,2}-u_{1,3}) + \nonumber \\ 
&&\qquad \qquad u_{1,0} \left( u_{0,2} (u_{0,0}-u_{1,1}) (u_{0,1}-u_{1,2}) + u_{1,3} (u_{0,0} (u_{1,2}-u_{0,1})+u_{0,1} u_{1,1})\right) =0,\\
&& u_{0,2} (u_{0,0}-u_{1,1}) (u_{0,1}-u_{1,2}) (u_{1,0}-u_{2,1}) (u_{1,1}-u_{2,2}) + \nonumber \\
&& \qquad \qquad u_{2,0} \left( u_{0,1} u_{1,1} u_{2,1} (u_{1,1}-u_{0,0}) + u_{1,0} (u_{1,1}-u_{2,2}) (u_{0,0} (u_{0,1}-u_{1,2})-u_{0,1} u_{1,1}  \right)=0,\\
&& u_{0,1} (u_{0,0}-u_{1,1}) (u_{1,0}-u_{2,1}) (u_{2,0}-u_{3,1}) + u_{0,0} u_{1,0} u_{2,0} u_{3,0} =0,
\end{eqnarray}
\end{subequations}
satisfies R1 and R2. For the consistency requirement we  write the system as 
\begin{eqnarray*}
&& u_{1,3} =\frac{u_{0,2} \left(u_{0,3}+u_{1,0}\right) \left(u_{0,0}-u_{1,1}\right) \left(u_{0,1}-u_{1,2}\right)}{u_{0,0} \left(u_{0,3}+u_{1,0}\right) \left(u_{0,1}-u_{1,2}\right)+u_{1,1} \left(u_{0,3} u_{1,2}-u_{0,1} \left(u_{0,3}+u_{1,0}\right)\right)},\\ 
&& u_{2,2} = \frac{u_{1,1} \left(u_{1,0} \left(u_{0,0} \left(u_{0,1}-u_{1,2}\right)-u_{0,1} u_{1,1}\right) u_{2,0}+u_{0,1} \left(u_{1,1}-u_{0,0}\right) u_{2,1} u_{2,0}+u_{0,2} \left(u_{0,0}-u_{1,1}\right) \left(u_{0,1}-u_{1,2}\right) \left(u_{1,0}-u_{2,1}\right)\right)}{u_{1,0} \left(u_{0,0} \left(u_{0,1}-u_{1,2}\right)-u_{0,1} u_{1,1}\right) u_{2,0}+u_{0,2} \left(u_{0,0}-u_{1,1}\right) \left(u_{0,1}-u_{1,2}\right) \left(u_{1,0}-u_{2,1}\right)},\\ 
&& u_{3,1} =  u_{2,0}+\frac{u_{0,0} u_{1,0} u_{2,0} u_{3,0}}{u_{0,1} \left(u_{0,0}-u_{1,1}\right) \left(u_{1,0}-u_{2,1}\right)},
\end{eqnarray*}
and then check if the compatibility conditions ${\cal{S}}(u_{1,3}) = {\cal{T}}(u_{2,2})$, ${\cal{S}}(u_{2,2}) = {\cal{T}}(u_{3,1})$, ${\cal{S}}^2(u_{1,3}) = {\cal{T}}^2(u_{3,1})$ hold on solutions of the system. For the first two conditions we have to take into account only the system, whereas for the last one we have to use also the shifts of the system in order to replace $u_{2,3}$ and $u_{3,2}$. After some calculations with the help of symbolic software it follows that these conditions do hold on solutions of (\ref{eq:sys-ord3}) and thus the system is consistent.  \hfill $\Box$
\end{example}

Our requirements for the solvability of $C_N$ allow us to determine uniquely the solution of the system once appropriate initial values are given. More precisely,

\begin{proposition}
Consider the infinitely extended edges of triangle $\Delta_N$, i.e. the lines $n=0$, $m=0$ and $n+m= N+1$.
If initial values are given at 
\begin{enumerate}
\item  all the points on any two of these three lines, i.e. any two of the sets of values $\{u_{0,k}\}$, $\{u_{k,0}\}$ and $\{u_{k,N-k+1}\}$ for all $k\in {\mathbb{Z}}$,
\item  and all the interior points of $\Delta_N$, i.e. $\{u_{a,b}\}$, for all $0<a,b<N$ with $a+b <N+1$, 
\end{enumerate}
then the solution $u$ of the consistent system $C_N$ can be determined uniquely everywhere on the ${\mathbb{Z}}^2$ lattice.
\end{proposition}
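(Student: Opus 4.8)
We sketch the argument; throughout, ``known'' refers to values of $u$ already forced by the data, and we repeatedly invoke the translation invariance of $C_N$ (so every equation $E_i=0$ may be used at any lattice point). It suffices to show that every value of $u$ is forced by the data: uniqueness of the construction follows because each step below is a unique solve via R2, and the fact that the forced values never clash is exactly the content of consistency (R3).

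By translation invariance we may take the three lines to be $n=0$, $m=0$ and $n+m=N+1$. Up to interchanging the two lattice directions, the pair of lines carrying data is either (A) the two legs $n=0$, $m=0$, or (B) one leg, say $m=0$, together with the hypotenuse $n+m=N+1$. In both cases the first step is to recover $u$ on the whole closed triangle $\Delta_N$, i.e.\ at all $(a,b)$ with $a,b\ge 0$ and $a+b\le N+1$. The only points of $\Delta_N$ not already among the data are the points of the ``missing'' line strictly between its endpoints, and these are filled in one at a time by R2. In case (A), for the hypotenuse point $(i,N+1-i)$ every other point of the rectangle $\mathcal R_i$ satisfies $a+b\le N$, hence lies in $\Delta_{N-1}$ and is therefore on a leg or among the interior data; so $E_i=0$ determines $u_{i,N+1-i}$ uniquely. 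In case (B), processing $k=1,2,\dots,N$ in order, once $u_{0,1},\dots,u_{0,k-1}$ have been recovered the point $(0,k)$ is the only unknown point of the rectangle $\mathcal R_{N+1-k}$ (whose other corners lie on the leg $m=0$ or on the hypotenuse, and whose remaining points are legs, hypotenuse, interior data, or already-recovered $u_{0,\ell}$), so $E_{N+1-k}=0$ determines $u_{0,k}$. After this step $u$ is known on all of $\Delta_N$ together with whichever two infinite lines were given.

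The second step propagates this to all of $\mathbb Z^2$ by the same mechanism: given a region $K$ on which $u$ is known, exhibit a translate $\mathcal R_i+v$ exactly one of whose four corners lies outside $K$, and use R2 to determine the value at that corner, thereby enlarging $K$. For the open first quadrant one adjoins the anti-diagonals $a+b=N+2,N+3,\dots$ in turn; for each new point a suitable $\mathcal R_i$ (with $i$ chosen from the position of the point) has its other three corners on the two legs or at strictly smaller anti-diagonal level, hence known. For the three ``quadrants'' beyond the two data lines one sweeps towards the already-known region in the same way; since the rectangles $\mathcal R_i$ have width and height at most $N$, one reaches the known part only after peeling off strips of width $N$, so the induction is on the strip index and then on position within the strip. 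The essential point enabling all of this is that $C_N$ contains a rectangle of every width $1,\dots,N$, which is exactly what is needed to always locate a translate of some $\mathcal R_i$ with a single unknown corner adjacent to $K$.

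The main obstacle is the combinatorial bookkeeping of the second step: arranging the sweeping order so that at every stage exactly one corner of the chosen rectangle is new. This is most delicate at the interface between the triangular region — where the interior data and the special inclusion $\mathcal R_i\subset\Delta_{N-1}$ are essential — and the purely outward march, and in the half-planes lying strictly beyond the two data lines, where the bounded size of the stencils forces the strip-by-strip argument rather than a single sweep. Once a valid ordering is fixed, every individual step is a direct application of R2, and any compatibility between steps that could in principle arise is precisely what R3 guarantees, so no further computation is needed.
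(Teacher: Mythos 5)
The paper states this proposition without proof, so there is no argument of the author's to measure yours against; your overall strategy --- recover the closed triangle first, then propagate outward by repeatedly exhibiting a translated stencil with exactly one undetermined point at a corner --- is the natural one, and your Step 1 is correct in both cases (A) and (B). One recurring imprecision before the main issue: you phrase the applicability of R2 as ``the other three corners are known'', but $E_i$ depends on \emph{all} $(i+1)(N+2-i)$ points of $\mathcal{R}_i$, so the invariant you must carry through the induction is that every point of the translated rectangle other than the target corner is known. In your first-quadrant anti-diagonal sweep this happens to hold automatically (every non-corner point also sits at strictly smaller level), but as stated the condition is the wrong one.

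The substantive gap is that your Step 2 fails in case (B), where the data lines are $m=0$ and the hypotenuse. The anti-diagonal sweep of the first quadrant already stalls at level $N+3$: to determine $(1,N+2)$ every admissible translate of some $\mathcal{R}_i$ with $(1,N+2)$ as its top-right corner contains either $(0,N+2)$ (not on a data line, not in $\Delta_N$, and not produced by the sweep, since the only rectangle having it as a corner and staying at levels $\le N+2$ sticks into column $n=-1$) or points of the left wedge $\{m\ge 1,\ n+m\le N\}$ with $n\le -1$, none of which are known yet; the other three corner choices involve rows or levels not yet reached. The mechanism that actually rescues case (B) is absent from your sketch: one must first fill the left wedge column by column ($n=-1,-2,\dots$, each column bottom-up, resting on the line $m=0$ and the previously done column), and then recover each row $m=b$ with $b\ge N+2$ (resp.\ $b\le -1$) by \emph{anchoring} on the unique known point $(N+1-b,b)$ where the hypotenuse meets that row and marching left and right with translates of the width-one stencil $\mathcal{R}_1$, whose other $2N+1$ points lie in the $N$ rows already recovered. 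The anchor is unavoidable: the top (or bottom) edge of every $\mathcal{R}_i$ contains $i+1\ge 2$ points of a fresh row, so no translated stencil can have a single unknown in a row where nothing is yet known --- which is also why ``peeling off strips of width $N$'' cannot start by itself and why the second data line is genuinely needed. Your Step 2 as written proves case (A) only; case (B) needs this separate ordering.
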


In particular, we refer to the values of $u$ along the lines $m=0$ and $n=0$, i.e. $u_{k,0}$ and $u_{0,k}$ for all $k \in {\mathbb{Z}}$, and all the interior points of $\Delta_N$ as  {\emph{ the standard dynamical variables}}.

\begin{center}
	\begin{figure}[th]
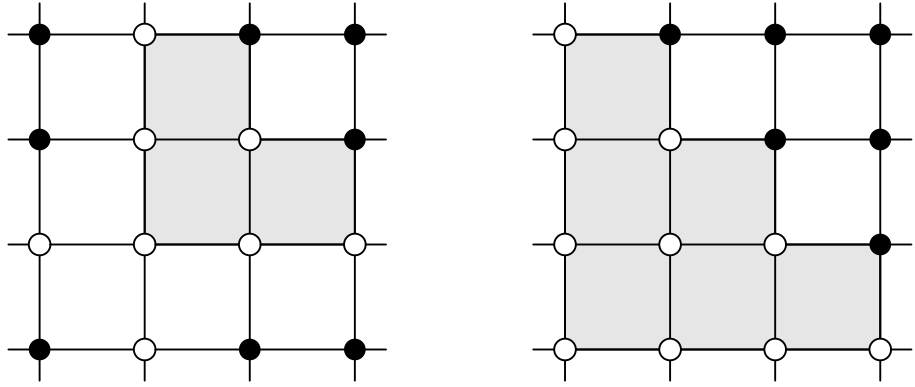

		\centertexdraw{ \setunitscale 0.55
			\linewd 0.02 \arrowheadtype t:F 
			\htext(0 0.5) {\phantom{T}}
			\move (1 1) \lvec (3 1) \lvec (3 2) \lvec(2 2) \lvec (2 3) \lvec (1 3) \lvec(1 1) \lfill f:0.9
			\move (-0.3 0) \lvec(3.3 0)
			\move (-0.3 1) \lvec(3.3 1)
			\move (-0.3 2) \lvec(3.3 2)
			\move (-0.3 3) \lvec(3.3 3)
			\move (0 -0.3) \lvec(0 3.3)
			\move (1 -0.3) \lvec(1 3.3)
			\move (2 -0.3) \lvec(2 3.3)
			\move (3 -0.3) \lvec(3 3.3)
			\move (1 1) \fcir f:1 r:0.1 \move(1 1) \lcir r:0.1 
			\move (2 1) \fcir f:1 r:0.1 \move(2 1) \lcir r:0.1 
			\move (3 1) \fcir f:1 r:0.1 \move(3 1) \lcir r:0.1 
			\move (1 2) \fcir f:1 r:0.1 \move(1 2) \lcir r:0.1 
			\move (2 2) \fcir f:1 r:0.1 \move(2 2) \lcir r:0.1 
			\move (1 3) \fcir f:1 r:0.1 \move(1 3) \lcir r:0.1  
			\move (0 1) \fcir f:1 r:0.1 \move(0 1) \lcir r:0.1 
			\move (1 0) \fcir f:1 r:0.1 \move(1 0) \lcir r:0.1
			\move (0 0) \fcir f:0 r:0.1 \move(0 2) \fcir f:0 r:0.1
			\move (0 3) \fcir f:0 r:0.1 \move (2 0) \fcir f:0 r:0.1
			\move (2 3) \fcir f:0 r:0.1 \move (3 0) \fcir f:0 r:0.1
			\move (3 2) \fcir f:0 r:0.1 \move (3 3) \fcir f:0 r:0.1
			
			\move (5 0) \lvec (8 0) \lvec (8 1) \lvec(7 1) \lvec(7 2) \lvec (6 2) \lvec (6 3) \lvec (5 3) \lvec (5 0) \lfill f:0.9
			\move (4.7 0) \lvec(8.3 0)
			\move (4.7 1) \lvec(8.3 1)
			\move (4.7 2) \lvec(8.3 2)
			\move (4.7 3) \lvec(8.3 3)
			\move (5 -0.3) \lvec(5 3.3)
			\move (6 -0.3) \lvec(6 3.3)
			\move (7 -0.3) \lvec(7 3.3)
			\move (8 -0.3) \lvec(8 3.3)
			\move (5 0) \fcir f:1 r:0.1 \move(5 0) \lcir r:0.1 
			\move (6 0) \fcir f:1 r:0.1 \move(6 0) \lcir r:0.1 
			\move (7 0) \fcir f:1 r:0.1 \move(7 0) \lcir r:0.1 
			\move (8 0) \fcir f:1 r:0.1 \move(8 0) \lcir r:0.1 
			\move (5 1) \fcir f:1 r:0.1 \move(5 1) \lcir r:0.1 
			\move (5 2) \fcir f:1 r:0.1 \move(5 2) \lcir r:0.1 
			\move (5 3) \fcir f:1 r:0.1 \move(5 3) \lcir r:0.1 
			\move (6 1) \fcir f:1 r:0.1 \move(6 1) \lcir r:0.1 
			\move (6 2) \fcir f:1 r:0.1 \move(6 2) \lcir r:0.1 
			\move (7 1) \fcir f:1 r:0.1 \move(7 1) \lcir r:0.1
			\move (8 1) \fcir f:0 r:0.1 \move (7 2) \fcir f:0 r:0.1
			\move (8 2) \fcir f:0 r:0.1 \move (6 3) \fcir f:0 r:0.1
			\move (7 3) \fcir f:0 r:0.1 \move (8 3) \fcir f:0 r:0.1
		}
		\caption{\emph{\small{Standard dynamical variables (white disks) for $N=2$ and $N=3$. If initial values are given at the white vertices, then solution $u$ can be found uniquely at any other lattice point (black disks).}}} \label{fig:ivp}
	\end{figure}
\end{center}

\noindent The standard dynamical variables are of particular interest as they are involved in the generalized symmetries and the integrability of the underlying consistent system.
\begin{definition}\label{def:int-1}
We call consistent system $C_N$ integrable if it admits infinite hierarchies of symmetries which depend on a finite but otherwise unspecified number of standard dynamical variables.
\end{definition}
All the consistent systems we have at our disposal admit two hierarchies of symmetries none of which involve any dynamical  variable $u_{a,b}$ with $0<a,b<N$ and $a+b <N+1$. Thus we can  slightly modify the above definition as follows.
\begin{definition}\label{def:int}
We call consistent system $C_N$ integrable if it admits infinite hierarchies of symmetries in both lattice directions each one of which depends on a finite but otherwise unspecified number of dynamical variables $u_{k,0}$ or $u_{0,k}$ only.
\end{definition}

\begin{example}
The second order systems (\ref{eq:intro-sys}) and (\ref{eq:ex2-e1}, \ref{eq:ex2-e2}) are integrable and their lowest order symmetries were given in the previous section in Examples \ref{ex:deg} and \ref{ex:const}, respectively. The third order system (\ref{eq:sys-ord3}) is also integrable and its lowest order symmetries are generated by
\begin{equation}\label{eq:sys-ord3-sm1}
\partial_t u_{0,0} = u_{0,0} \left(\frac{u_{3,0}}{u_{-1,0}} + \frac{u_{2,0}}{u_{-2,0}} + \frac{u_{1,0}}{u_{-3,0}}\right)
\end{equation}
and
\begin{equation}\label{eq:sys-ord3-sm2}
\partial_s u_{0,0} = \frac{u_{0,0} u_{0,1} u_{0,2} u_{0,3}}{(u_{0,3}+u_{0,-1}) (u_{0,2}+u_{0,-2}) (u_{0,1}+u_{0,-3}) },
\end{equation}
respectively. \hfill $\Box$
\end{example}

\section{Lattice paths and consistent systems of difference equations} \label{sec:sys}

Having developed a general framework for consistent systems, in this section we present the construction of a hierarchy of consistent systems which employs lattice paths. We discuss the properties of these systems and prove their integrability by deriving their symmetries. Moreover we present a deformation for the first three members of this family and discuss their relations to known quad equations.

We start our derivations with the construction of certain polynomials which will be the building blocks of the hierarchy of consistent systems.
\begin{itemize}
\item[1.] Consider all the lattice paths from $(0,0)$ to $(i,j)$, where $i \ge 0$, $j \ge 0$ and $i+j > 0$, which can be constructed by moving only parallel to the positive direction of either axis. For  every choice of $i$ and $j$ there exist $\tfrac{(i+j)!}{i! j!}$ different paths which connect $i+j+1$ points on the lattice, including the origin and the endpoint $(i,j)$. We denote these paths with ${\cal{P}}_{(i,j)}^{(a)}$, where $a = 1,\ldots,\tfrac{(i+j)!}{i! j!}$.

\item[2.] With every path ${\cal{P}}_{(i,j)}^{(a)}$ we associate the product of the values of the function $u$ at the $i+j+1$ lattice points connected by the path, 
$${\cal{P}}_{(i,j)}^{(a)} = u_{0,0} \overset{i+j-1\, \text{ terms}}{\cdots \cdots \cdots} u_{i,j}.$$

\item[3.] With the above association, we define the {\emph{multilinear and homogeneous polynomials of degree}} $i+j+1$
\begin{equation}
Q_{(i,j)} = \sum_{a=1}^{(i+j)!/i! j!} {\cal{P}}^{(a)}_{(i,j)}, \quad {\mbox{with }} \quad i \ge 0,\,\,j \ge 0 {\mbox{~ and ~}} i+j > 0.
\end{equation}

By exploiting the combinatorics in the construction of polynomials $Q_{(i,j)}$, we can find two different ways to determine these polynomials recursively as it is described below.

\begin{lemma} \label{lem:def-Q}
If we define
\begin{subequations} \label{eq:def-Q}
\begin{equation}
\begin{array}{ll} Q_{(i,j)} = 0, &{\text{ if at least one index is negative}},  \\ Q_{(0,0)} =  u_{0,0}, & \end{array}
\end{equation}
then polynomials $Q_{(i,j)}$, with $i,j \ge 0$ and $i+j>0$, can be determined recursively by
\begin{equation} \label{eq:def-Qb}
Q_{(i,j)} = u_{0,0} \left\{{\cal{S}} \left(Q_{(i-1,j)}\right) + {\cal{T}} \left(Q_{(i,j-1)}\right)\right\} ,
\end{equation}
or
\begin{equation} \label{eq:def-Qc}
Q_{(i,j)} = \left\{Q_{(i,j-1)} + Q_{(i-1,j)}\right\}  u_{i,j} . 
\end{equation}
\end{subequations}
\end{lemma}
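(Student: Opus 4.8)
The plan is to prove each of the two recursions directly, by decomposing the set of monotone lattice paths ending at $(i,j)$ --- one decomposition according to the \emph{first} step taken from the origin, the other according to the \emph{last} step into $(i,j)$. Together with the stated initial datum $Q_{(0,0)}=u_{0,0}$ and the convention that $Q_{(i,j)}=0$ whenever some index is negative, either recursion then determines all $Q_{(i,j)}$ uniquely by induction on $i+j$, so it suffices to verify that the polynomials coming from the definition in item~3 satisfy them. Note that multilinearity and homogeneity of degree $i+j+1$ are automatic from the definition and need not be re-established.

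For (\ref{eq:def-Qb}): fix $i,j\ge 0$ with $i+j>0$. Every path ${\cal{P}}^{(a)}_{(i,j)}$ leaves the origin either by the step $(0,0)\to(1,0)$ or by the step $(0,0)\to(0,1)$, and these possibilities partition the $\tfrac{(i+j)!}{i!j!}$ paths into two disjoint classes. Deleting the point $(0,0)$ from a path of the first class and translating by $(-1,0)$ produces a monotone path from $(0,0)$ to $(i-1,j)$, and this correspondence is a bijection onto the set of all ${\cal{P}}^{(b)}_{(i-1,j)}$ (injective because a translation is, surjective because prepending $(0,0)\to(1,0)$ inverts it). Under the product association the monomial attached to a path of the first class is $u_{0,0}$ times the ${\cal{S}}$-shift of the monomial attached to the corresponding path from $(0,0)$ to $(i-1,j)$; summing gives the contribution $u_{0,0}\,{\cal{S}}(Q_{(i-1,j)})$. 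The same argument with the two directions interchanged shows the second class contributes $u_{0,0}\,{\cal{T}}(Q_{(i,j-1)})$, and adding the two yields (\ref{eq:def-Qb}). If $i=0$ the first class is empty and $Q_{(-1,j)}=0$, so the identity still holds; likewise for $j=0$.

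For (\ref{eq:def-Qc}): argue symmetrically using the last step. Every path into $(i,j)$ arrives either from $(i-1,j)$ or from $(i,j-1)$, again a disjoint partition. Deleting the final point $(i,j)$ from a path of the first type leaves a monotone path from $(0,0)$ to $(i-1,j)$ --- and now \emph{no} translation is required, since the truncated path still starts at the origin --- and this is a bijection onto the set of all ${\cal{P}}^{(b)}_{(i-1,j)}$. Hence the monomial of each such path equals the corresponding monomial of $Q_{(i-1,j)}$ times $u_{i,j}$; summing over the first type gives $Q_{(i-1,j)}\,u_{i,j}$, the second type gives $Q_{(i,j-1)}\,u_{i,j}$, and (\ref{eq:def-Qc}) follows. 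The degenerate cases $i=0$ or $j=0$ are again covered by the negativity convention.

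The one point that really deserves care --- and the only genuine difference between the two recursions --- is the presence of the shift operators in (\ref{eq:def-Qb}) versus their absence in (\ref{eq:def-Qc}): in the first-step decomposition the truncated sub-path no longer begins at the origin and must be translated, which is exactly what introduces ${\cal{S}}$ and ${\cal{T}}$, whereas in the last-step decomposition it still begins at the origin. As a sanity check, both recursions are compatible with the known cardinalities, $\binom{i+j-1}{i-1}+\binom{i+j-1}{i}=\binom{i+j}{i}$, and both reproduce $Q_{(1,0)}=u_{0,0}u_{1,0}$, $Q_{(0,1)}=u_{0,0}u_{0,1}$ and $Q_{(1,1)}=u_{0,0}(u_{1,0}+u_{0,1})u_{1,1}$ from $Q_{(0,0)}=u_{0,0}$. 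Beyond this, no real obstacle arises; the statement is essentially a bookkeeping identity for monotone lattice paths.
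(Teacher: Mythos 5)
Your proof is correct and follows essentially the same route as the paper's: decomposing the monotone paths by their first step (which introduces the shifts ${\cal{S}},{\cal{T}}$ after translating the truncated path back to the origin) to obtain (\ref{eq:def-Qb}), and by their last step to obtain (\ref{eq:def-Qc}). Your treatment is merely more explicit about the bijections and the boundary cases $i=0$ or $j=0$.
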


\begin{proof}
Since we start always from the origin and we can make only one step every time either right or up, initially we can move from $u_{0,0}$ either to $u_{1,0}$ or to $u_{0,1}$, respectively. Then we use the paths starting from $(1,0)$ terminating at $(i,j)$ which are encoded into ${\cal{S}} (Q_{(i-1,j)})$, and the ones from $(0,1)$ ending at $(i,j)$ given by ${\cal{T}}(Q_{(i,j-1)})$. This observation and the properties of the polynomials lead to the first recursive definition (\ref{eq:def-Qb}). Alternatively, we can reach point $(i,j)$ either from $(i,j-1)$ by moving one step up, or from $(i-1,j)$ by making one step right. The first approach is equivalent to $Q_{(i,j-1)} u_{i,j}$ and the second one to $Q_{(i-1,j)} u_{i,j}$ whereas their sum gives the second definition (\ref{eq:def-Qc}).
\end{proof}
\end{itemize}

With the above polynomials at our disposal and for any $N \ge 1$, we define the overdetermined system of equations
\begin{equation} \label{eq:SN}
\Sigma_N = \left\{Q_{(i,N-i+1)}  +  (-1)^{N-i} \alpha_N =0 \, , \quad i = 1, \ldots, N  \right\},
\end{equation}
where $\alpha_N \in {\mathbb{R}}^*$ is a parameter.

The geometric construction of $Q_{(i,j)}$ and their properties clearly imply that system $\Sigma_N$ satisfies requirements R1 and R2. Moreover, 
\begin{proposition}\label{prop:con}
System $\Sigma_N$ is consistent.
\end{proposition}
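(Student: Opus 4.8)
The plan is to establish consistency by verifying Definition \ref{def:con}, and by Remark \ref{rem:con-ch} it suffices to check only the consecutive conditions ${\cal{T}}(F_{\ell+1}) = {\cal{S}}(F_\ell)$ for $\ell = 1,\ldots,N-1$, equivalently that the two ways of computing each $u_{\ell,N+1-\ell}$ from adjacent equations agree. The key is to exploit the recursive structure of the polynomials $Q_{(i,j)}$ given in Lemma \ref{lem:def-Q}, using both recursions \eqref{eq:def-Qb} and \eqref{eq:def-Qc} simultaneously. First I would write the equation $E_i$ of $\Sigma_N$, namely $Q_{(i,N+1-i)} + (-1)^{N-i}\alpha_N = 0$, and apply the ``up'' recursion \eqref{eq:def-Qc} to split $Q_{(i,N+1-i)} = (Q_{(i,N-i)} + Q_{(i-1,N+1-i)})\,u_{i,N+1-i}$; this isolates the top corner value $u_{i,N+1-i}$ and shows R2 explicitly, giving the solved form $F_i$.

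Next I would take the consecutive pair $E_\ell$ and $E_{\ell+1}$. Shifting $E_{\ell+1}$ down by ${\cal{T}}^{-1}$ (or equivalently shifting $E_\ell$ right by ${\cal{S}}$) brings both equations to involve the common quadrilateral corner $u_{\ell,N+1-\ell}$ together with $u_{\ell+1,N-\ell}$ and $u_{\ell,N-\ell+1}$-type neighbours; I would compute the shared point $u_{\ell+1,N+1-\ell}$ (a white disk in Figure \ref{fig:cn}) in the two ways dictated by the two equations of the overlapping stencils. The natural strategy is: using the ``right'' recursion \eqref{eq:def-Qb} to express ${\cal{S}}(Q_{(\ell,N-\ell)})$ and the ``up'' recursion \eqref{eq:def-Qc} to express ${\cal{T}}(Q_{(\ell,N-\ell)})$, one finds that both computations of the disputed value reduce to the same rational expression in $Q$-polynomials on $\Delta_{N-1}$ and the parameter $\alpha_N$ — precisely because the two recursions \eqref{eq:def-Qb}, \eqref{eq:def-Qc} are two decompositions of the \emph{same} object $Q_{(i,j)}$. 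I expect the common value to take a clean closed form, analogous to the expressions for $u_{2,2}$ and $u_{2,2}, u_{3,1}$ appearing in the worked examples, something like a ratio whose numerator and denominator are built from $Q_{(i,j)}$ with $i+j \le N$.

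A cleaner route, which I would try first, is an induction on $N$: assume $\Sigma_{N-1}$ is consistent and observe that the equations of $\Sigma_N$ restricted to a suitable sub-staircase, after eliminating the corner variables via \eqref{eq:def-Qc}, reproduce shifted copies of $\Sigma_{N-1}$-type relations among the $Q$'s; the single new compatibility to check then collapses using \eqref{eq:def-Qb}. The main obstacle I anticipate is bookkeeping: tracking which shifts ${\cal{S}}^a{\cal{T}}^b$ of which $Q_{(i,j)}$ enter each equation, and confirming that the alternating sign pattern $(-1)^{N-i}$ in \eqref{eq:SN} is exactly what makes the $\alpha_N$-dependent terms cancel or combine correctly in the compatibility condition — a wrong sign would obstruct consistency, so the proof must show this choice is forced. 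Verifying that the degree/multilinearity properties guarantee \emph{unique} solvability (not just solvability) in R2, needed to legitimately pass to the solved form \eqref{eq:gen-cn-1}, is a minor additional point that follows from $Q_{(i,j)}$ being linear in each corner variable with coefficient a nonzero polynomial (a product of other $u$'s plus lower $Q$'s).
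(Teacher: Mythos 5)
Your main route is essentially the paper's proof: solve each equation for the corner value via the ``up'' recursion \eqref{eq:def-Qc} to get $F_i = Q_{(i-1,N-i+1)} + Q_{(i,N-i)}$, then check only consecutive pairs, where \eqref{eq:def-Qb} regroups ${\cal{S}}(F_\ell) + {\cal{T}}(F_{\ell+1})$ into $\bigl(Q_{(\ell,N-\ell+1)} + Q_{(\ell+1,N-\ell)}\bigr)/u_{0,0}$ and the alternating signs $(-1)^{N-i}$ make the two $\alpha_N$ contributions cancel on solutions --- exactly the mechanism you identify. The induction on $N$ you propose as a ``cleaner'' alternative is unnecessary: the direct computation already closes in two lines once the two recursions are combined.
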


\begin{proof}
To check the consistency of system $\Sigma_{N}$ (\ref{eq:SN}) first we solve its equations for $u_{i,N-i+1}$. In view of (\ref{eq:def-Qc}) this leads to
\begin{equation} \label{eq:con-u}
 u_{i,N-i+1} = \frac{-(-1)^{N-i} \alpha_{N}}{ F_i} := \frac{-(-1)^{N-i} \alpha_{N}}{Q_{(i-1,N-i+1)} + Q_{(i,N-i)}},\quad i=1,\ldots N.
 \end{equation}
Next we have to examine if $ (-1)^i {\cal{T}}^{i-j}(F_i) = (-1)^j {\cal{S}}^{i-j}(F_j)$, on solutions of $\Sigma_N$ for all $i,j=1,\ldots,N$ and $i > j$. For our purposes it is sufficient to see if these relations hold for any pair of consecutive values for indices $i$ and $j$, i.e. for any $(i,j) = (\ell+1,\ell)$ with $\ell=1, \ldots, N-1$. With these choices the above requirements become
\begin{eqnarray*}
{\cal{S}}(F_{\ell}) + {\cal{T}}(F_{\ell+1}) &=& {\cal{S}}\Big(Q_{(\ell-1,N-\ell+1)} + Q_{(\ell,N-\ell)}\Big) + {\cal{T}}\Big(Q_{(\ell,N-\ell)} + Q_{(\ell+1,N-\ell-1)}\Big)  \\
	&=&  {\cal{S}}(Q_{(\ell-1,N-\ell+1)}) + {\cal{T}} (Q_{(\ell,N-\ell)}) +  {\cal{S}}(Q_{(\ell,N-\ell)}) + {\cal{T}} (Q_{(\ell+1,N-\ell-1)}) \\
	&=& \frac{Q_{(\ell,N-\ell+1)}}{u_{0,0}} + \frac{Q_{(\ell+1,N-\ell)}}{u_{0,0}} = \frac{-(-1)^{N-\ell} \alpha_N}{u_{0,0}} + \frac{-(-1)^{N-\ell-1} \alpha_N}{u_{0,0}}  = 0,
\end{eqnarray*}
where we have also used (\ref{eq:def-Qb}) and (\ref{eq:SN})  in the last two steps, respectively. This clearly shows that for any two consecutive  values of $i$, relations (\ref{eq:con-u}) are consistent on solutions of $\Sigma_{N}$, and thus $\Sigma_{N}$ is consistent. 
\end{proof}

Hence $\Sigma_N$ satisfies all three requirements R1--R3. To prove $\Sigma_{N}$ is integrable, we study the symmetries of two equations, namely of $Q_{(N,1)} + \alpha_N =0$ and $Q_{(1,N)} - (-1)^N \alpha_{N}=0$, using the method of \cite{X3}.

\begin{proposition} \label{prop:sym}
Equation $Q_{(N,1)} + \alpha_N = 0$ admits infinite hierarchies of generalised symmetries in the first direction. The first member of this hierarchy has order $N+1$ and is generated by
\begin{equation}\label{eq:UN} 
\partial_t u_{0,0} \,=\, u_{0,0}  \left({\cal{S}}-1\right) \prod_{k=0}^{N} {\cal{S}}^{k-N-1} \left(\frac{1}{Q_{(N+1,0)} \,-\,\alpha_N} \right).
\end{equation}
Respectively, equation $Q_{(1,N)} - (-1)^N \alpha_N = 0$ admits infinite hierarchies of generalised symmetries in the second direction. The first member of this hierarchy has order $N+1$ and is generated by
\begin{equation}\label{eq:UM}
\partial_s u_{0,0} \,=\, u_{0,0}  \left({\cal{T}}-1\right) \prod_{k=0}^{N} {\cal{T}}^{k-N-1} \left(\frac{1}{Q_{(0,N+1)} \,+\,(-1)^N \alpha_N} \right).
\end{equation}
\end{proposition}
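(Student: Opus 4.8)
The second assertion follows from the first by the reflection $\iota:(n,m)\mapsto(m,n)$ of the lattice, which sends $Q_{(i,j)}$ to $Q_{(j,i)}$, swaps $\mathcal{S}$ with $\mathcal{T}$, and carries a symmetry in the first direction to one in the second: applying $\iota$ to ``$Q_{(N,1)}+\alpha_N=0$ admits (\ref{eq:UN})'' and relabelling the arbitrary parameter $\alpha_N\mapsto(-1)^{N+1}\alpha_N$ (which again ranges over ${\mathbb{R}}^*$) produces exactly ``$Q_{(1,N)}-(-1)^N\alpha_N=0$ admits (\ref{eq:UM})''. So the plan is to prove only the first statement. Write $E:=Q_{(N,1)}+\alpha_N$; by Lemma \ref{lem:def-Q} one has $Q_{(N,1)}=u_{N,1}\{Q_{(N,0)}+Q_{(N-1,1)}\}$ with $Q_{(k,0)}=\prod_{l=0}^{k}u_{l,0}$, and since $Q_{(N,1)}$ is multilinear, $E=0$ is an $N$-quad equation of the kind treated by the method of \cite{X3}; moreover $E=0$ together with its $\mathcal{S}$-shifts expresses the values $u_{k,1}$ as rational functions of the $u_{j,0}$ and of $N$ integration constants.

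The first step is to check that the characteristic $F=u_{0,0}(\mathcal{S}-1)\Phi_{0,0}$ of (\ref{eq:UN}), with $\Phi_{0,0}=\prod_{k=0}^{N}\mathcal{S}^{k-N-1}\psi_{0,0}=\psi_{-N-1,0}\cdots\psi_{-1,0}$ and $\psi_{r,0}=\big(\mathcal{S}^{r}Q_{(N+1,0)}-\alpha_N\big)^{-1}$, solves the determining equation $\sum_{i,j}(\partial E/\partial u_{i,j})\,\mathcal{S}^{i}\mathcal{T}^{j}(F)=0$ on solutions of $E=0$. I would split this sum into its $j=0$ and $j=1$ parts; the $j=1$ part carries $\mathcal{T}(F)$ and hence the top-row variables $u_{k,1}$, which are eliminated using $E=0$ and its shifts. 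The recursions (\ref{eq:def-Qb})--(\ref{eq:def-Qc}) relate $\mathcal{S}^{r}Q_{(N+1,0)}$ to shifts of $Q_{(N,1)}$ and therefore, on $E=0$, to constants, after which the factor $(\mathcal{S}-1)$ makes the remaining sum telescope and the residual rational identity in the $u_{k,0}$ collapses to zero. (Note that $Q_{(N+1,0)}-\alpha_N=0$ is exactly the equation obtained by pushing the pattern of $\Sigma_N$ one step past its last member $Q_{(N,1)}+\alpha_N=0$, which is reminiscent of the degeneration mechanism of Example \ref{ex:const}.) The claimed order $N+1$ is read off the extreme shifts $u_{\pm(N+1),0}$ entering $F$.

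The second step is the existence of the infinite hierarchy, for which I would use the Miura transformation announced in the introduction (and displayed, for $N=2$, at the end of Example \ref{ex:const}): a rational substitution $w=w([u])$ in the $u_{k,0}$ which takes the flow (\ref{eq:UN}) to the Bogoyavlensky lattice of order $N+1$. Because that lattice is classically integrable --- it has a Lax pair and an infinite family of pairwise commuting generalised symmetries --- and Miura transformations transport such families, one gets an infinite hierarchy of differential-difference equations whose lowest member is (\ref{eq:UN}); that each of them is again a symmetry of $E=0$ is ensured by a master symmetry of $E=0$, which the method of \cite{X3} produces and which generates the same hierarchy inside the symmetry algebra of $E=0$ by repeated Lie bracketing.

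The principal obstacle is the verification in the first step: one must eliminate the variables $u_{k,1}$ correctly and then recognise the large rational expression that remains as identically zero, the only structural aids being the combinatorial identities of Lemma \ref{lem:def-Q}, the product form of $\Phi_{0,0}$, and the telescoping forced by $\mathcal{S}-1$. For a fixed small $N$ this is a finite (if lengthy) computer-algebra check; for general $N$ I would set it up as an induction on $N$ driven by the recursions of Lemma \ref{lem:def-Q}. A lesser obstacle is to pin down the exact form and gauge of the Miura map of the second step, although its existence is forced by the structure of (\ref{eq:UN}) and can be confirmed explicitly for small $N$.
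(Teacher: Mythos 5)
Your proposal matches the paper's own (very terse) proof in both of its essential moves: the reduction of the second claim to the first via the invariance $(u_{k,l},\alpha_N)\mapsto(u_{l,k},(-1)^{N+1}\alpha_N)$, and the direct verification that $\partial_t Q_{(N,1)}=0$ on solutions of $Q_{(N,1)}+\alpha_N=0$, which the paper likewise delegates to a case-by-case computation (citing \cite{MX} for $N=1$ and \cite{X3} for $N=2$). Your additional material on the Miura map to the Bogoyavlensky lattice goes beyond what the paper actually argues --- that substitution appears only in the remark following the corollary, and the existence of the full infinite hierarchy is asserted rather than proved there as well.
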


\begin{proof}
Because of the invariance of $\Sigma_N$ under the transformation $\left(u_{k,l},\alpha_N \right) \mapsto \left(u_{l,k},(-1)^{N+1} \alpha_N \right)$, it is sufficient to study the symmetries of equation $Q_{(N,1)}+\alpha_N=0$. This can be done on a case-by-case basis and it is sufficient to show that $\partial_t Q_{(N,1)} =0$ on solutions of $Q_{(N,1)}+\alpha_N=0$ (see also \cite{MX} for $N=1$ and \cite{X3} for $N=2$).
\end{proof}

We can now extend the symmetries of these equations to symmetries of system $\Sigma_{N}$.
\begin{corollary}
The differential-difference equations (\ref{eq:UN}) and (\ref{eq:UM}) define the lowest order symmetries of system $\Sigma_{N}$.
\end{corollary}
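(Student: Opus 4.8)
The plan is to leverage the two facts already at our disposal: that $\Sigma_N$ is consistent (Proposition \ref{prop:con}) and that the flows (\ref{eq:UN}) and (\ref{eq:UM}) are symmetries of the single equations $Q_{(N,1)}+\alpha_N=0$ and $Q_{(1,N)}-(-1)^N\alpha_N=0$ respectively (Proposition \ref{prop:sym}). What remains is to promote ``symmetry of one equation of the system'' to ``symmetry of the whole system''. Concretely, I must verify the determining equation (in the sense of the Definition in Section \ref{sec:2q})
$$
\sum_{i,j}\frac{\partial E_\ell}{\partial u_{i,j}}\,{\cal S}^i{\cal T}^j(F)=0
$$
for every equation $E_\ell := Q_{(\ell,N-\ell+1)}+(-1)^{N-\ell}\alpha_N=0$ of $\Sigma_N$, where $F$ is the right-hand side of (\ref{eq:UN}) (and then, by the symmetry $(u_{k,l},\alpha_N)\mapsto(u_{l,k},(-1)^{N+1}\alpha_N)$ of $\Sigma_N$, the same argument covers (\ref{eq:UM})).

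First I would recall that evaluating a symmetry ``on solutions of the system'' means working modulo $\Sigma_N$ together with all its shifts, i.e. on the solution manifold parametrised by the standard dynamical variables of Proposition 3.4. The key structural point is that, by the consistency established in Proposition \ref{prop:con}, prolonging the flow (\ref{eq:UN}) along the $t$-direction is well defined on this manifold: the values $u_{i,N+1-i}$ and all further points above the line $n+m=N+1$ are rational functions of the dynamical variables, and differentiating these relations in $t$ propagates the flow consistently. Thus it suffices to check the determining equation for $E_\ell$ for a single representative $\ell$, because the consistency relations (\ref{eq:con-con}) — which, by Remark \ref{rem:con-ch}, need only be imposed on consecutive pairs — intertwine the time-evolution of neighbouring equations: if $\partial_t E_{\ell}\equiv 0$ on solutions and $\partial_t({\cal T}^{i-j}F_i-{\cal S}^{i-j}F_j)\equiv 0$, then $\partial_t E_{\ell\pm1}\equiv 0$ as well. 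I would take $\ell=N$, for which $E_N = Q_{(N,1)}+\alpha_N$, and here Proposition \ref{prop:sym} already gives $\partial_t Q_{(N,1)}=0$ on $Q_{(N,1)}+\alpha_N=0$; the only extra input is that the prolongation of $F$ needed to compute $\partial_t E_N$ stays inside the stencil of $\Sigma_N$, which is guaranteed by R1–R3. Running the same induction downward from $\ell=N$ through $\ell=N-1,\dots,1$ using the consecutive consistency conditions then yields $\partial_t E_\ell\equiv 0$ for all $\ell$.

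An alternative, perhaps cleaner, route is to use the recursions of Lemma \ref{lem:def-Q} directly: from (\ref{eq:def-Qb}) and (\ref{eq:def-Qc}) one has $Q_{(\ell,N-\ell+1)}$ expressible through shifts of $Q_{(N,1)}$ in a way compatible with the flow, so that $\partial_t Q_{(\ell,N-\ell+1)}$ reduces to shifted copies of $\partial_t Q_{(N,1)}$ modulo the system; since the latter vanishes, so does the former. I expect the main obstacle to be bookkeeping rather than conceptual: one must be careful that, when prolonging (\ref{eq:UN}) far enough in the $n$-direction to hit all variables appearing in $E_\ell$ for small $\ell$, the substitutions used to reach the solution manifold are exactly the shifts of $\Sigma_N$ and nothing more — i.e. that no auxiliary equation sneaks in. This is precisely what requirements R1–R3 are designed to ensure, so the corollary follows, with the detailed telescoping of $({\cal S}-1)\prod_k{\cal S}^{k-N-1}(\cdots)$ against $\partial E_\ell/\partial u_{i,j}$ left as the routine computation underpinning Proposition \ref{prop:sym}.
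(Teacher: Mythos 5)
Your argument is correct and follows essentially the same route as the paper: Proposition \ref{prop:sym} gives $\partial_t Q_{(N,1)}=0$ (resp.\ $\partial_s Q_{(1,N)}=0$) on solutions, and the consistency of $\Sigma_N$ is then used to express every other $Q_{(\ell,N-\ell+1)}$ as a shifted copy of $Q_{(N,1)}$ modulo the system, so that its $t$-derivative vanishes as well; the paper packages your consecutive-pair induction into the single set of relations ${\cal T}^p(Q_{(N,1)})=(-1)^p{\cal S}^p(Q_{(N-p,p+1)})$. The only cosmetic slip is that this telescoping rests on the consistency relations rather than on the combinatorial recursions of Lemma \ref{lem:def-Q} invoked in your ``alternative route''.
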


\begin{proof}
Firstly we observe that relations
\begin{equation} \label{eq:con2}
{\cal{T}}^p(Q_{(N,1)}) = (-1)^p {\cal{S}}^p (Q_{(N-p,p+1)}), \quad {\cal{S}}^p(Q_{(1,N)}) =  (-1)^p {\cal{T}}^p (Q_{(p+1,N-p)}),\quad p= 1,\ldots, N-1,
\end{equation}
hold on solutions of $\Sigma_{N}$ as a consequence of the consistency of $\Sigma_N$. It follows from the first relation in (\ref{eq:con2}) that $Q_{(N-p,p+1)}= (-1)^p {\cal{S}}^{-p} {\cal{T}}^p(Q_{(N,1)})$ for all $p=1,\ldots,N-1$, and thus  
$$\partial_t Q_{(N-p,p+1)}= (-1)^p {\cal{S}}^{-p} {\cal{T}}^p( \partial_t Q_{(N,1)}).$$
But since $\partial_t Q_{(N,1)} =0$ on solutions of $\Sigma_{N}$, we conclude that also $\partial_t Q_{(N-p,p+1)}=0$. Similarly the second relation in (\ref{eq:con2}) leads to $Q_{(p+1,N-p)}= (-1)^p {\cal{S}}^{p} {\cal{T}}^{-p}(Q_{(1,N)})$ and subsequently to $\partial_s Q_{(p+1,N-p)}= (-1)^p {\cal{S}}^{p} {\cal{T}}^{-p}(\partial_sQ_{(1,N)})$. Since $\partial_sQ_{(1,N)}=0$ on solutions of $\Sigma_{N}$, we arrive at $\partial_s Q_{(p+1,N-p)}= 0$.
\end{proof}

\begin{remark}A final remark is that the difference substitution
\begin{equation} \label{eq:difsub}
v_{0,0} = \frac{1}{Q_{(N+1,0)} - \alpha_{N}}
\end{equation}
maps (\ref{eq:UN}) to the Bogoyavlensky lattice
\begin{equation}\label{eq:Bog}
\partial_t v_{0,0} = - v_{0,0} (\alpha_{N} v_{0,0} + 1) \left( v_{N+1,0} \ldots v_{1,0} - v_{-1,0} v_{-2,0} \ldots v_{-N-1,0} \right).
\end{equation}
Indeed, in terms of the above substitution symmetry (\ref{eq:UN}) can be written as $\partial_t u_{0,0} = u_{0,0} ({\cal{S}}-1) \prod_{k=0}^{N-1} v_{k-N-1,0}$, whereas the $t$-derivative of (\ref{eq:difsub}) is
 $$\partial_t v_{0,0} = -v_{0,0}^2 Q_{(N+1,0)} \sum_{i=0}^{N+1}  \frac{\partial_t u_{i,0}}{u_{i,0}}=  -v_{0,0} (\alpha_N v_{0,0}+1)  \sum_{i=0}^N  ({\cal{S}}-1) \prod_{k=0}^{N+1} v_{i+k-N-1}=  -v_{0,0} (\alpha_N v_{0,0}+1) \left( \prod_{i=1}^{N+1} v_{i,0} - \prod_{i=1}^{N+1} v_{-i,0}  \right).$$ 
 Similar considerations clearly hold for (\ref{eq:UM}). \hfill $\Box$
\end{remark}

We can easily implement recursive formulae (\ref{eq:def-Q}) for the construction of $\Sigma_N$ and below  we give the systems which correspond to $N=1$, $2$ and $3$.
\begin{itemize}
\item[1.] System $\Sigma_1$ is the known quadrilateral equation 
\begin{equation} \label{eq:2}
u_{0,0} \left(u_{1,0}  + u_{0,1}\right)  u_{1,1} + \alpha_1 = 0,
\end{equation}
which was first given in \cite{MX} along with its lowest order symmetries. 
\begin{equation} \label{eq:2sym}
\partial_t u_{0,0} = \frac{u_{0,0}^2 (u_{2,0} u_{1,0} - u_{-1,0} u_{-2,0})}{\prod_{i=0}^2 (u_{i,0} u_{i-1,0} u_{i-2,0} - \alpha_1 )},\quad 
\partial_s u_{0,0} = \frac{u_{0,0}^2 (u_{0,2} u_{0,1} - u_{0,-1} u_{0,-2})}{\prod_{i=0}^2 (u_{0,i} u_{0,i-1} u_{0,i-2} - \alpha_1)}.
\end{equation}
It was also derived in a different context in \cite{FX}. 

\item[2.] System $\Sigma_2$ is constituted by the two equations
\begin{subequations}\label{eq:3}
\begin{eqnarray}
&& u_{0,0}  \left( u_{1,0} u_{1,1} + u_{0,1} u_{1,1} +u_{0,1} u_{0,2} \right) u_{1,2} - \alpha_2 = 0,\\
&&  u_{0,0} \left( u_{1,0} u_{2,0} + u_{1,0} u_{1,1}  +u_{0,1} u_{1,1}\right) u_{2,1} + \alpha_2   = 0.
\end{eqnarray}
\end{subequations}
Its lowest order symmetries in both directions are generated by 
\begin{equation}\label{eq:3sym}
\partial_t u_{0,0} = \frac{u_{0,0}^2 (u_{3,0} u_{2,0} u_{1,0} - u_{-1,0} u_{-2,0} u_{-3,0})}{\prod_{i=0}^3 (u_{i,0} u_{i-1,0} u_{i-2,0} u_{i-3,0} - \alpha_2 )}, \quad \partial_s u_{0,0} = \frac{u_{0,0}^2 (u_{0,3} u_{0,2} u_{0,1} - u_{0,-1} u_{0,-2} u_{0,-3})}{\prod_{i=0}^3 (u_{0,i} u_{0,i-1} u_{0,i-2} u_{0,i-3} + \alpha_2)},
\end{equation} 
see also \cite{X3}. 

\item[3.] System $\Sigma_3$ is given by the three equations
\begin{subequations}\label{eq:4}
\begin{eqnarray}
&& u_{0,0} \left( u_{1,0} u_{1,1} u_{1,2} + u_{0,1} u_{1,1} u_{1,2} +u_{0,1} u_{0,2} u_{1,2}  +u_{0,1} u_{0,2} u_{0,3}\right) u_{1,3} + \alpha_3,\\
&& u_{0,0} \left(u_{1,0} u_{2,0} u_{2,1}+ u_{1,0} u_{1,1} u_{2,1} + u_{1,0} u_{1,1} u_{1,2} + u_{0,1} u_{1,1} u_{2,1} + u_{0,1} u_{1,1} u_{1,2} + u_{0,1} u_{0,2} u_{1,2} \right) u_{2,2} - \alpha_3 = 0,\\
&& u_{0,0} \left( u_{1,0} u_{2,0} u_{3,0} + u_{1,0} u_{2,0} u_{2,1} +u_{1,0} u_{1,1} u_{2,1}  +u_{0,1} u_{1,1} u_{2,1}\right) u_{3,1} + \alpha_3 = 0= 0,
\end{eqnarray}
\end{subequations}
and its lowest order symmetries are generated by
\begin{equation}\label{eq:4sym}
\partial_t u_{0,0} = \frac{u_{0,0}^2 (u_{4,0} u_{3,0} u_{2,0} u_{1,0} - u_{-1,0} u_{-2,0} u_{-3,0} u_{-4,0})}{\prod_{i=0}^4 (u_{i,0} u_{i-1,0} u_{i-2,0} u_{i-3,0} u_{i-4,0} - \alpha_3 )},\quad 
\partial_s u_{0,0} = \frac{u_{0,0}^2 (u_{0,4} u_{0,3} u_{0,2} u_{0,1} - u_{0,-1} u_{0,-2} u_{0,-3} u_{0,-4})}{\prod_{i=0}^4 (u_{0,i} u_{0,i-1} u_{0,i-2} u_{0,i-3} u_{0,i-4} - \alpha_3)}.
\end{equation}
\end{itemize}

We could have considered lattice paths connecting $(i,0)$ to $(0,j)$ by moving only left or up. This construction leads to consistent systems which actually follow from $\Sigma_N$ by reflecting them over the line $x=0$ (resp. over the line $y=0$), or equivalently by employing the point transformation $u_{k,l}\mapsto u_{k,-l}$ (resp. $u_{k,l} \mapsto u_{-k,l}$). We may also combine the latter transformations with a reciprocal one to derive other equivalent forms of $\Sigma_N$. There is however an interesting construction which employs these two transformations and polynomials $Q_{(N,1)}$, and leads to $N$-quad equations which may be viewed as a deformation of $Q_{(N,1)} + \alpha_{N} =0$. The derivation and some properties of these $N$-quad equations are summarised in the following statement. 

\begin{proposition} \label{prop:Tz-all}
Let  $R_{(i,j)}$ be the polynomial following from $Q_{(i,j)}$ according to
\begin{equation}
R_{(i,j)}  = {\cal{S}}^i \left( \left. Q_{(i,j)}\right|_{u_{k,l} \rightarrow \frac{1}{u_{-k,l}}}\right) \prod_{k=0}^{i} \prod_{l=0}^{j} u_{k,l}= {\cal{T}}^j \left( \left. Q_{(i,j)}\right|_{u_{k,l} \rightarrow \frac{1}{u_{k,-l}}} \right) \prod_{k=0}^{i} \prod_{l=0}^{j} u_{k,l}\,.
\end{equation}
Then the equation
\begin{equation}\label{eq:Tz-all}
Q_{(N,1)} + c_N \,=\,  R_{(N,1)} + \frac{1}{c_N} \prod_{i=0}^{N} u_{i,0} u_{i,1},\quad N=1,2,3,\ldots,
\end{equation}
where $c_N$ is a real constant, admits a hierarchy of symmetries in the first lattice direction. The first member of this hierarchy has order $N+1$ and is generated by
\begin{equation} \label{eq:sym-NT-n}
\partial_t u_{0,0} = u_{0,0} {\cal{S}}^{-N} \left( \frac{\prod_{i=0}^{N+1} {\cal{S}}^{i} \left(Q_{(N-1,0)} - c_N \right)}{\prod_{i=0}^{N-1}  {\cal{S}}^{i} \left(Q_{(N+1,0)} -c_N \right) }\right) (1-{\cal{S}}^{-N-1}) \left(\frac{1}{ Q_{(N+1,0)} -c_N} - \frac{1}{ {\cal{S}}\left(Q_{(N-1,0)} -c_N\right)}\right).
\end{equation}
Moreover, by setting $u \rightarrow u \epsilon^{-1}$, $c_N \rightarrow \alpha_{N} \epsilon^{-N-2}$ and $t \rightarrow t \epsilon^{N+2}$ and considering the limit $\epsilon \rightarrow 0$, equation (\ref{eq:Tz-all}) reduces to $Q_{(N,1)} + \alpha_{N} =0$ and its symmetry (\ref{eq:sym-NT-n}) becomes (\ref{eq:UN}).
\end{proposition}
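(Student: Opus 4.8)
The plan is to treat the three assertions of the proposition in turn: the consistency of the two formulas for $R_{(i,j)}$, the symmetry of equation (\ref{eq:Tz-all}), and the $\epsilon\to 0$ reduction to (\ref{eq:UN}).

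For the well-definedness of $R_{(i,j)}$ I would exploit a reflection symmetry of the polynomials $Q_{(i,j)}$. Reversing each lattice path from $(0,0)$ to $(i,j)$ shows immediately that $Q_{(i,j)}$ is invariant under the central reflection $u_{k,l}\mapsto u_{i-k,j-l}$ of the rectangle $[0,i]\times[0,j]$; the same fact follows inductively from (\ref{eq:def-Qb}) and (\ref{eq:def-Qc}). Next, observe that the composite operation ${\cal S}^i\circ(u_{k,l}\mapsto 1/u_{-k,l})$ acts on $Q_{(i,j)}$ by first inverting every variable in place and then relabelling $u_{k,l}\mapsto u_{i-k,l}$ (reflection in the vertical midline of the rectangle), whereas ${\cal T}^j\circ(u_{k,l}\mapsto 1/u_{k,-l})$ performs the same inversion followed by $u_{k,l}\mapsto u_{k,j-l}$ (reflection in the horizontal midline). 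Since the vertical reflection composed with the central reflection equals the horizontal reflection, and $Q_{(i,j)}$ is fixed by the central reflection, the two relabellings give the same rational function; multiplying by the symmetric factor $\prod_{k=0}^{i}\prod_{l=0}^{j}u_{k,l}$ yields the claimed identity for $R_{(i,j)}$.

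For the symmetry I would follow the method of \cite{X3}, exactly as in the proof of Proposition \ref{prop:sym}. The flow (\ref{eq:sym-NT-n}) acts only in the first lattice direction and involves $u_{k,0}$ alone; prolonging it by the shifts ${\cal S}^{k}$ and ${\cal S}^{k}{\cal T}$ to the variables $u_{k,0}$, $u_{k,1}$ ($k=0,\dots,N$) occurring in (\ref{eq:Tz-all}), the determining equation reduces to the requirement that $\partial_t$ annihilate the defining function $Q_{(N,1)}-R_{(N,1)}-c_N^{-1}\prod_{i=0}^{N}u_{i,0}u_{i,1}$ modulo (\ref{eq:Tz-all}). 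Because $Q_{(M,0)}=u_{0,0}u_{1,0}\cdots u_{M,0}$, the flow is assembled from $v_{0,0}:=(Q_{(N+1,0)}-c_N)^{-1}$ and its shifts, and both sides of (\ref{eq:Tz-all}) are governed by the polynomials $Q_{(N+1,0)}-c_N$ and $Q_{(N-1,0)}-c_N$, so this verification telescopes; I would carry it out directly, with symbolic assistance for $N=1,2,3$, the general $N$ being governed by the same bookkeeping. This is the step I expect to be the main obstacle: there is no obvious shortcut around a genuine computation, and the cleanest way to organise it for general $N$ is probably to produce the difference substitution that maps (\ref{eq:sym-NT-n}) onto the discrete Sawada--Kotera lattice, so that the symmetry property is inherited from that known integrable system.

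The reduction in the limit is a weight count. The polynomial $Q_{(i,j)}$ is homogeneous of degree $i+j+1$, so $Q_{(N,1)}$ has degree $N+2$; from its definition $R_{(N,1)}$ has degree $2(N+1)-(N+2)=N$, and $\prod_{i=0}^{N}u_{i,0}u_{i,1}$ has degree $2N+2$. Substituting $u\mapsto u\epsilon^{-1}$, $c_N\mapsto\alpha_N\epsilon^{-N-2}$ and multiplying (\ref{eq:Tz-all}) by $\epsilon^{N+2}$ turns it into $Q_{(N,1)}+\alpha_N=\epsilon^{2}\bigl(R_{(N,1)}+\alpha_N^{-1}\prod_{i=0}^{N}u_{i,0}u_{i,1}\bigr)$, whose $\epsilon\to0$ limit is $Q_{(N,1)}+\alpha_N=0$. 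For the symmetry one has $Q_{(N-1,0)}-c_N=\epsilon^{-N-2}(\epsilon^{2}Q_{(N-1,0)}-\alpha_N)\to-\alpha_N\epsilon^{-N-2}$ while $Q_{(N+1,0)}-c_N=\epsilon^{-N-2}(Q_{(N+1,0)}-\alpha_N)$; inserting these into (\ref{eq:sym-NT-n}) the numerator product collapses to a constant times a power of $\epsilon$, the operator $1-{\cal S}^{-N-1}$ kills the $\epsilon$-independent additive constant produced by the term $-({\cal S}(Q_{(N-1,0)}-c_N))^{-1}$, and after the overall power of $\epsilon$ is absorbed by $t\mapsto t\epsilon^{N+2}$ (the remaining numerical factor being immaterial, since a nonzero multiple of a symmetry is again a symmetry) one is left exactly with $u_{0,0}({\cal S}-1)\prod_{k=0}^{N}{\cal S}^{k-N-1}(Q_{(N+1,0)}-\alpha_N)^{-1}$, i.e.\ (\ref{eq:UN}).
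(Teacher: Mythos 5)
Your proposal is correct, and it is in places more self-contained than the paper's own proof, which is largely an argument by citation. The paper disposes of the symmetry claim by pointing to the literature: the $N=1$ case is Adler's Tzitzeica equation from \cite{A}, and the flows (\ref{eq:sym-NT-n}) together with the Miura transformations (\ref{eq:NT-n-miura}) onto the discrete Sawada--Kotera lattice (\ref{eq:dSK}) are taken from \cite{AP1,AP}; no determining equation is verified in the text for general $N$. Your instinct that the cleanest organisation of the computation is via the difference substitution onto dSK is therefore exactly the route the paper takes, except that you propose to rederive it while the paper cites it; your fallback of case-by-case symbolic verification for $N=1,2,3$ matches the level of rigour the paper itself applies elsewhere (cf.\ the proof of Proposition \ref{prop:sym}), though for arbitrary $N$ it remains a plan rather than a proof --- a limitation shared with the published argument. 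Two of your ingredients genuinely add to what is written: the central-reflection invariance $u_{k,l}\mapsto u_{i-k,j-l}$ of $Q_{(i,j)}$ (immediate from path reversal or from (\ref{eq:def-Qb})), which justifies the equality of the two expressions defining $R_{(i,j)}$ --- a fact the paper asserts without comment --- and the fully worked degeneration, where the paper only records the degree count $\deg Q_{(i,j)}=i+j+1$, $\deg R_{(i,j)}=ij$ and calls the rest straightforward; your computation (including the observation that $1-{\cal{S}}^{-N-1}$ annihilates the constant limit of $({\cal{S}}(Q_{(N-1,0)}-c_N))^{-1}$, and that the surviving factor $(-\alpha_N)^{N+2}$ is an immaterial rescaling of $t$) is exactly that omitted calculation, carried out correctly.
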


\begin{proof}
When $N=1$ this is Adler's Tzitzeica equation,
$$u_{0,0} (u_{1,0} + u_{0,1}) u_{1,1} + c_1 = u_{0,0} + u_{1,1} + \frac{u_{0,0} u_{1,0} u_{0,1} u_{1,1}}{c_1}.$$
The corresponding results were first presented in \cite{A}, whereas the relation between (\ref{eq:2}) and (\ref{eq:A2}) was given in \cite{MX}. The existence of the higher order equations was suggested in \cite{AP} (see Remark 4 on page 13) but no explicit formulae were given. The differential-difference equations (\ref{eq:sym-NT-n}) were given in \cite{AP1,AP} along with the Miura transformations 
\begin{equation}\label{eq:NT-n-miura}
{\cal{M}}_{N+1} \,:\quad v_{0,0} = \frac{{\cal{S}} (c_N-Q_{(N-1,0)})}{Q_{(N+1,0)}-c_N}\,u_{N+1,0},\quad {\cal{M}}_0 \,:\quad v_{0,0} = \frac{{\cal{S}} (c_N-Q_{(N-1,0)})}{Q_{(N+1,0)}-c_N}\,u_{0,0}, 
\end{equation}
which map (\ref{eq:sym-NT-n}) to the discrete Sawada-Kotera equation ${\rm{dSK}}^{(1,N)}$
\begin{equation}\label{eq:dSK}
\partial_t v_{0,0} = v_{0,0}^2 \left( \prod_{i=1}^{N+1} v_{i,0} - \prod_{i=1}^{N+1} v_{-i,0}\right) - v_{0,0}\left(\prod_{i=1}^{N} v_{i,0} - \prod_{i=1}^{N} v_{-i,0} \right). 
\end{equation}
The degeneration is a straightforward calculation once the degrees of the polynomials involved are taken into account,  $\deg Q_{(i,j)}=i+j+1$ and  $\deg R_{(i,j)} = i j$.
\end{proof}

Equation (\ref{eq:Tz-all}) and its reflection across the line $x=y$ accompanied by the transformation $c_N \mapsto (-1)^{N+1} c_N$, i.e. 
\begin{equation}\label{eq:Tz-all-b}
Q_{(1,N)} + (-1)^{N+1} c_N \,=\,  R_{(1,N)} + \frac{(-1)^{N+1}}{c_N} \prod_{i=0}^{N} u_{0,i} u_{1,i},\quad N=1,2,3,\ldots,
\end{equation}
may be used as building blocks of other consistent systems. Their construction uses the procedure described in Example \ref{ex:const}. More precisely, it involves the lowest order symmetries of equations (\ref{eq:Tz-all}) and (\ref{eq:Tz-all-b}) along with the requirement that the symmetries must be compatible with every equation of the system. This construction is very involved and the complexity of the calculations increases with $N$. We constructed two new such systems which along with Adler's Tzitzeica equation we denote with $A_1$, $A_2$ and $A_3$, respectively. They depend on a parameter $c_N$ ($N=1,2,3$), and degenerate to $\Sigma_1$, $\Sigma_2$ and $\Sigma_3$, respectively, by setting $u \rightarrow u \epsilon^{-1}$, $c_N \rightarrow \alpha_{N} \epsilon^{-N-2}$  and considering the limit $\epsilon \rightarrow 0$. They satisfy all our three requirements R1--R3 for consistent systems and admit infinite hierarchies of symmetries in both directions, the lowest order of which is $N+1$ and are generated by (\ref{eq:sym-NT-n}).

\begin{itemize}
	\item[1.] System $A_1$, as we have already mentioned, corresponds to Adler's Tzitzeica equation
		\begin{equation}\label{eq:A2}
	u_{0,0} (u_{1,0} + u_{0,1}) u_{1,1} + c_1 = u_{0,0} + u_{1,1} + \frac{u_{0,0} u_{1,0} u_{0,1} u_{1,1}}{c_1},
	\end{equation}
	a well known integrable equation \cite{A}.
	\item[2.] System $A_2$  is constituted by the following two equations.
	\begin{subequations}\label{eq:3T}
		\begin{eqnarray}
		&& u_{0,0}  \left( u_{1,0} u_{1,1} + u_{0,1} u_{1,1} +u_{0,1} u_{0,2} \right) u_{1,2} - c_2  = u_{0,0} u_{0,1} + u_{0,0} u_{1,2} + u_{1,1} u_{1,2} - \frac{u_{0,0} u_{1,0} u_{0,1} u_{1,1} u_{0,2} u_{1,2}}{c_2} \\
		&& u_{0,0} \left( u_{1,0} u_{2,0} + u_{1,0} u_{1,1}  +u_{0,1} u_{1,1}\right) u_{2,1} + c_2  =  u_{0,0} u_{1,0} + u_{0,0} u_{2,1} + u_{1,1} u_{2,1} + \frac{u_{0,0} u_{1,0} u_{2,0} u_{0,1} u_{1,1} u_{2,1}}{c_2} \label{eq:A3a}  
		\end{eqnarray}
	\end{subequations}
	It can be easily verified that this is a consistent system which degenerates to (\ref{eq:3}) as described above (with $N=2$). Its lowest order symmetries in the first direction are generated by
	\begin{equation} \label{eq:A3-n}
	\partial_t u_{0,0} = u_{0,0}\,\frac{\prod_{i=-2}^{1} {\cal{S}}^{i} \left(Q_{(1,0)} - c_2 \right)}{\prod_{i=-2}^{-1}  {\cal{S}}^{i} \left(Q_{(3,0)} -c_2 \right) }\, (1-{\cal{S}}^{-3}) \left(\frac{1}{ Q_{(3,0)} -c_2} - \frac{1}{ {\cal{S}}\left(Q_{(1,0)} -c_2\right)}\right), \quad \left\{\begin{array}{l} Q_{(1,0)} = u_{0,0} u_{1,0} \\ Q_{(3,0)} = u_{0,0} u_{1,0} u_{2,0} u_{3,0} \end{array}\right.,
	\end{equation}
	which is related to the discrete Sawada-Kotera equation ${\rm{dSK}}^{(1,3)}$. Similar considerations hold for the symmetries in the other direction which follow from  (\ref{eq:A3-n}) by applying the changes $u_{\ell,0} \rightarrow u_{0,\ell}$, $c_2 \rightarrow -c_2$ and ${\cal{S}} \rightarrow {\cal{T}}$. 
	\item[3.] System $A_3$ is given by 
	\begin{subequations}\label{eq:4T}
		\begin{eqnarray}
		&& u_{0,0} \left( u_{1,0} u_{2,0} u_{3,0} + u_{1,0} u_{2,0} u_{2,1} +u_{1,0} u_{1,1} u_{2,1}  +u_{0,1} u_{1,1} u_{2,1}\right) u_{3,1} + c_3 = \nonumber \\
		&& \qquad \qquad \qquad \qquad u_{0,0} u_{1,0} u_{2,0} + (u_{0,0} u_{1,1}+ u_{0,0} u_{2,1} + u_{1,1} u_{2,1}) u_{3,1} + \frac{ u_{0,0} u_{1,0} u_{2,0} u_{3,0} u_{0,1} u_{1,1} u_{2,1} u_{3,1} }{c_3} \,, \label{eq:A4a} \\
		&& \nonumber \\
		&& u_{0,0} \left( u_{1,0} u_{1,1} u_{1,2} + u_{0,1} u_{1,1} u_{1,2} +u_{0,1} u_{0,2} u_{1,2}  +u_{0,1} u_{0,2} u_{0,3}\right) u_{1,3} + c_3=  \nonumber  \\
		&& \qquad \qquad \qquad \qquad u_{0,0} u_{0,1} u_{0,2} + (u_{0,0} u_{1,1}+ u_{0,0} u_{1,2} + u_{1,1} u_{1,2}) u_{1,3} + \frac{u_{0,0} u_{0,1} u_{0,2} u_{0,3} u_{1,0} u_{1,1} u_{1,2} u_{1,3}}{c_3}, \\
		&& \nonumber \\
		&& u_{0,0} \left(u_{1,0} u_{2,0} u_{2,1}+ u_{1,0} u_{1,1} u_{2,1} + u_{1,0} u_{1,1} u_{1,2} + u_{0,1} u_{1,1} u_{2,1} + u_{0,1} u_{1,1} u_{1,2} + u_{0,1} u_{0,2} u_{1,2} \right) u_{2,2} - c_3=  \nonumber \\
		&&\qquad \qquad u_{0,0} \left(u_{1,0}+u_{0,1}+u_{2,1}+u_{1,2}\right) u_{2,2}-u_{0,0}-u_{1,1} - u_{2,2} + ({\cal{ST}} +1)\left( u_{0,0} (u_{1,0}+u_{0,1}) u_{1,1}\right) \nonumber \\
		&&\qquad \qquad -\, \frac{u_{0,0} (u_{1,0}+u_{2,1}) (u_{0,1}+u_{1,2}) u_{2,2}  + ({\cal{ST}} +1)\left( u_{0,0}  u_{1,0} u_{0,1} u_{1,1}\right)}{c_3}\nonumber \\
		&& \quad \quad+\, \frac{u_{0,0} \left((u_{1,0} (u_{2,0}+u_{1,1}) + u_{0,1} (u_{0,2}+u_{1,1})) u_{2,1} u_{1,2} + u_{1,0} u_{0,1} \left( (u_{2,0}+u_{1,1}) u_{2,1} + (u_{0,2}+ u_{1,1}) u_{1,2}\right) \right) u_{2,2}}{c_3} \nonumber\\
		&& \qquad \qquad -\, \frac{ u_{0,0} u_{1,0} u_{0,1} u_{2,1} u_{1,2} u_{2,2} \left( u_{2,0} u_{0,2} + u_{1,1} (u_{2,0}+u_{0,2})\right)}{c_3}   +\frac{u_{0,0} u_{1,0} u_{2,0} u_{0,1} u_{1,1} u_{2,1} u_{0,2} u_{1,2} u_{2,2}}{c_3^2} . 
		\end{eqnarray}	\end{subequations}
		It degenerates to (\ref{eq:4}) and its lowest order symmetries in the first direction are generated by
		\begin{equation} \label{eq:A4-n}
  		\partial_t u_{0,0} = u_{0,0}  \frac{\prod_{i=-3}^{1} {\cal{S}}^{i} \left(Q_{(2,0)} - c_3 \right)}{\prod_{i=-3}^{-1}  {\cal{S}}^{i} \left(Q_{(4,0)} -c_3 \right) } (1-{\cal{S}}^{-4}) \left(\frac{1}{ Q_{(4,0)} -c_3} - \frac{1}{ {\cal{S}}\left(Q_{(2,0)} -c_3\right)}\right), \quad \left\{\begin{array}{l} Q_{(2,0)} = u_{0,0} u_{1,0} u_{2,0} \\ Q_{(4,0)} = u_{0,0} u_{1,0} u_{2,0} u_{3,0} u_{4,0} \end{array}\right.,
		\end{equation}
		which is related to the discrete Sawada-Kotera equation ${\rm{dSK}}^{(1,4)}$. The symmetries in the other direction follow from (\ref{eq:A4-n}) by applying the changes $u_{\ell,0} \rightarrow u_{0,\ell}$ and ${\cal{S}} \rightarrow {\cal{T}}$. 
\end{itemize}

\section{Conclusions \& Discussion}

We considered $N$-th order overdetermined systems of difference equations which are consistent and integrable according to our requirements and definitions in Section \ref{sec:def}. We demonstrated how such systems follow from known lower order integrable systems and presented two new hierarchies. The first one was constructed using lattice paths whereas the second hierarchy can be interpreted as a deformation of the former. In particular the first members of these hierarchies coincide with the quad equation (\ref{eq:2}) introduced in \cite{MX} and Adler's Tzitzeica equation (\ref{eq:A2}) studied in \cite{A}, respectively. In this way we have shown that these two equations are not isolated but they are the lowest order members of two hierarchies of consistent systems denoted here with $\Sigma$ and $A$, respectively. Systems $\Sigma_N$ can be constructed for any order $N$ but, due to computational limitations, we were able to construct only the first three members of the $A$ hierarchy.

There are a lot of interesting questions about consistent systems. It is very well known that multidimensional consistency is a strong integrability property closely related to other integrability aspects,  e.g. Lax pairs and B{\"a}cklund transformations. However it is not clear if the type of consistency considered here can be employed in a similar way. Most of the well known integrable equations also fit into the framework of direct linearization or Kac-Moody algebras or can be derived as reductions of discrete KP equations. Could overdetermined consistent systems be derived in any of these ways? On the other hand from the examples we presented it seems that there exists a relation between consistency and symmetries of $N$-quad equations. It would be interesting to explore this connection further in order to understand the structure of symmetries of $N$-quad equations but also to derive integrability conditions for consistent systems.

\end{document}